\begin{document}

\makeatletter
\DeclareRobustCommand{\cev}[1]{%
	\mathpalette\do@cev{#1}%
}
\newcommand{\do@cev}[2]{%
	\fix@cev{#1}{+}%
		\reflectbox{$\m@th#1\vec{\reflectbox{$\fix@cev{#1}{-}\m@th#1#2\fix@cev{#1}{+}$}}$}%
		\fix@cev{#1}{-}%
}
\newcommand{\fix@cev}[2]{%
	\ifx#1\displaystyle
		\mkern#23mu
		\else
		\ifx#1\textstyle
		\mkern#23mu
		\else
		\ifx#1\scriptstyle
		\mkern#22mu
		\else
		\mkern#22mu
		\fi
		\fi
		\fi
}
\makeatother

\newcommand{\ppr}[2]{\pi(#1, #2)}
\newcommand{\pr}[1]{Pr(#1)}
\newcommand{\List}[1]{[\,#1\,]}
\newcommand{\set}[1]{\{#1\}}
\newcommand{\rmax}{\varepsilon}
\newcommand{\ninf}{l_{\infty}}
\newcommand{\abs}[1]{|#1|}
\newcommand{\seq}[1]{\langle#1\rangle}
\newcommand{\ub}[2]{#1^{(#2)}}
\newcommand{\norm}[1]{\lVert#1\rVert}
\newcommand{\eqplus}{\mathrel{+}=}
\newcommand{\eqminus}{\mathrel{-}=}
\newcommand{\dout}{\textnormal{\tiny {\sc out}}}
\newcommand{\din}{\textnormal{\tiny {\sc in}}}
\newcommand{\tn}[1]{\textnormal{\tiny {#1}}}
\newcommand{\fvec}{\vec{P}}
\newcommand{\gvec}{\vec{R}}
\newcommand{\bvec}{\cev{P}}
\newcommand{\rvec}{\cev{R}}
\newcommand{\pivec}{\vec{\pi}}
\newcommand{\phivec}{\vec{\Phi}}
\newcommand{\shrink}[1]{{\small{#1}\normalsize}}
\newcommand{\E}[1]{\operatorname{E}\,[\,#1\,]}
\newcommand{\K}{k}
\newcommand{\N}{\mathit{N}}

\newcommand{\dbar}{\overline{d}}
\newcommand{\inlineheading}[1]{\textbf{#1}}

\newenvironment{proofof}[1]{\smallskip\noindent{\em Proof of #1:}} {\hspace*{\fill}\par}
\algrenewcommand\alglinenumber[1]{\tiny #1:}
\algnewcommand\algorithmicinput{\textbf{INPUT:}}
\algnewcommand\Input{\item[\algorithmicinput]}

\newtheorem{theorem}{Theorem}
\newtheorem{lemma}[theorem]{Lemma}
\newtheorem{proposition}[theorem]{Proposition}
\newtheorem{corollary}[theorem]{Corollary}
\newtheorem{definition}{Definition}

\newtoggle{fullpaper}
\toggletrue{fullpaper}

\title{Approximate Personalized PageRank on Dynamic Graphs}

\author{
	Hongyang Zhang\\
	Department of Computer Science\\
	Stanford University\\
	\texttt{hongyang@cs.stanford.edu}
	\and
	Peter Lofgren\\
	Department of Computer Science\\
	Stanford University\\
	\texttt{plofgren@cs.stanford.edu}
	\and
	Ashish Goel\\
	Department of Management Science and Engineering\\
	Stanford University\\
	\texttt{ashishg@stanford.edu}\\
}
\date{}
\maketitle

\begin{abstract}

We propose and analyze two algorithms for maintaining approximate Personalized PageRank (PPR) vectors on a dynamic graph, where edges are added or deleted. Our algorithms are natural dynamic versions of two known local variations of power iteration.  One, Forward Push, propagates probability mass forwards along edges from a source node, while the other, Reverse Push, propagates local changes backwards along edges from a target.  In both variations, we maintain an invariant between two vectors, and when an edge is updated, our algorithm first modifies the vectors to restore the invariant, then performs any needed local push operations to restore accuracy.

For Reverse Push, we prove that for an arbitrary directed graph in a random edge model, or for an arbitrary undirected graph, given a uniformly random target node $t$, the cost to maintain a PPR vector to $t$ of additive error $\varepsilon$ as $k$ edges are updated is $O(k + \overline{d} / \varepsilon)$, where $\overline{d}$ is the average degree of the graph.  This is $O(1)$ work per update, plus the cost of computing a reverse vector once on a static graph.  For Forward Push, we show that on an arbitrary undirected graph, given a uniformly random start node $s$, the cost to maintain a PPR vector from $s$ of degree-normalized error $\varepsilon$ as $k$ edges are updated is $O(k + 1 / \varepsilon)$, which is again $O(1)$ per update plus the cost of computing a PPR vector once on a static graph. 

\end{abstract}

\newpage

\section{Introduction}

Personalized PageRank (PPR) models the relevance of nodes in a network from the point of view of a given node.  It has applications in search \cite{Jeh2003,03:Hav}, friend recommendations \cite{backstrom2011supervised,gupta2013wtf}, community detection \cite{yang2012defining, focs:chung}, video recommendations \cite{baluja2008video}, and other applications.  Because PPR is expensive to compute at query time, several authors have proposed pre-computing it for each user and storing it \cite{Jeh2003, berkhin2006bookmark,chakrabarti2007dynamic}. However, in practice graphs are dynamic, for example on a social network users are constantly adding new edges to the network, so we need a method of updating pre-computed PPR values.  In this work, we propose two new algorithms for updating pre-computed PPR vectors and give the first rigorous analysis of their running time.

There are four main algorithms for computing PPR \cite{thesis:lofgren}, of which only one has an analysis for dynamic graphs prior to our work.  The first is power iteration \cite{99:pr}, but it is very slow, requiring $\Omega(m)$ time per user (where $m$ is the number of edges), so it can't be used efficiently for PPR even on static graphs.  The second, Forward Push \cite{berkhin2006bookmark,focs:chung}, is a local variation of power iteration which starts from the source user (the node whose point of view we take) and pushes probability mass forwards along edges.  Berkin \cite{berkhin2006bookmark} proposed pre-computing Forward Push vectors from many source nodes, and Ohsaka et~al.~\cite{kdd:dynPr} proposed an algorithm for updating it on dynamic graphs, however no past work has analyzed the running time for maintaining Forward Push.  The third, Reverse Push \cite{Jeh2003,focs:chung}, is an alternative variation of power iteration which starts at each target node and pushes values backwards along edges to improve estimates.  Jeh and Widom \cite{Jeh2003} and Lofgren et~al.~\cite{wsdm:residualSampling} propose pre-computing Reverse Push vectors (or a variation on them), to enable efficient search, but no past work has proposed an algorithm for updating Reverse Push vectors on dynamic graphs.  Finally, a fourth method of computing PPR is Monte-Carlo \cite{Avrachenkov2007,Fogaras2005}, and for that algorithm Bahmani et~al.~\cite{vldb:walkUpd} give an efficient algorithm for updating it on dynamic graphs, with running time analysis in a random edge arrival order model.  In experiments, Ohsaka et~al.~\cite{kdd:dynPr} find that when maintaining very accurate PPR vectors, Monte-Carlo is slower than Forward-Push, motivating the analysis of updating Forward Push.

Our contribution is the first running time guarantees for updating Forward Push on undirected graphs and for updating Reverse Push on directed as well as undireted graphs.  The analysis is challenging because a new edge at one node can cause that node to push, which can cause a cascade of other nodes to push.  Understanding and bounding the size of this cascade required a novel amortized analysis.
%(The analysis is challenging because the cost of individual edge updates cascade to other edge updates:
%this is different from the analysis of random walk updates, where each walk is independent from the others.
%Such cascading lead to the coupling between different edge updates.
%In uur analysis, we amortize over all the edge update costs:
%this leads to a new picture for dynamic push algorithms.)
%The analysis is challenging because individual edge updates could require significantly more work than a typical edge update, so amortized analysis is needed to give useful bounds.  %In addition, a worst-case stream of updates can lead to unrealistically large running times (todo: see node 1 below?), but on real graphs (cite Japanese) these algorithms perform efficiently, so
We allow for a worst case graph, but we assume that edge updates arrive in a random order to better capture performance in practice.  The same assumption was used in the analysis of Monte Carlo \cite{vldb:walkUpd} where the bounds it leads to were found to model the running time in practice on Twitter.  Alternatively, for undirected graphs we prove a worst-case amortized running time.  In addition, for undirected graphs, we strengthen the analysis of Monte-Carlo on dynamic graphs \cite{vldb:walkUpd} to a worst-case edge arrival order.

Our algorithms are simple to implement, and we present two experiments for forward push.  One insight from our theoretical analysis is a different way of updating Forward Push values than the previously proposed method \cite{kdd:dynPr}, and we find that our variation is 1.5-3.5 times faster than that method without sacrificing accuracy.
In the second experiment, we evaluate forward push for the problem of finding the top K highest personalized PageRank
from a source node.
We compare to the Monte-Carlo method of Bahmani et al.~\cite{vldb:walkUpd} and found that
the forward push is 4.5-12 times more efficient in storage compared to Monte-Carlo,
and is able to do edge update 1.5 - 2.5 times faster.

The main idea of our algorithms is that both Forward Push and Reverse Push maintain an invariant between a pair of vectors, and when an edge is added or deleted, we first restore the invariant with an efficient change to the vectors, then perform local push operations as needed to control the error. 
% Because push operations can couple the cost of different edge updates, we needed a new analysis in dynamic settings:
% we found that the running time of a push iteration can be characterized by comparing the mass distribution,
% before and after the iteration.
% This picture differs from the classical analysis on random walk updates, where each walk is independent from other walks.
One limitation of our analysis is that it applies to the pure versions of Forward Push and Reverse Push, while some  prior work \cite{Jeh2003,berkhin2006bookmark} proposes combining Forward or Reverse push vectors together, but we believe our analysis would extend to that case.  It would also be interesting to extend our analysis adapt the Personalized PageRank search index of \cite{wsdm:residualSampling} to dynamic graphs.

\paragraph{Results}
For the Reverse Push algorithm, we show that on a worst-case graph, for a uniform random target, the expected time required to maintain PPR estimates at accuracy $\rmax$ to that target as $k$ edge updates arrive in a random order is $O(k + k/(n \rmax) + \dbar / \rmax)$.  Here $\rmax$ is the desired additive error of the estimates, and $\dbar$ is the average degree of nodes in the graph.  Since a random PPR value is $1/n$, values smaller than $1/n$ are not very meaningful.  Hence typically $\rmax = \Omega(1/n)$ and our running time is $O(k + \dbar / \rmax)$.  Since $\dbar / \rmax$ is the expected time required to compute Reverse Push from scratch \cite{waw:chung, thesis:lofgren}, we see that our incremental algorithm can maintain estimates after every edge update using $O(1)$ time per update and the time required to compute a single Reverse Push vector.

For the Forward Push algorithm, we show that on an arbitrary undirected graph and arbitrary edge arrival order, for a uniform random source node $s$, the worst-case running time to maintain a PPR vector from that source node is $O(k + k/(n\rmax) + 1/ \rmax)$.  Here $\rmax$ is a bound on the degree-normalized error (which we define later).  Typically $\rmax = \Omega(1/n)$, so this is $O(k + 1 / \rmax)$.  The cost of computing such a PPR vector from scratch is $O(1 / \rmax)$ \cite{focs:chung} so we again see that the cost to maintain the PPR vector over $k$ updates is $O(1)$ per update plus the cost of computing it once from scratch.

One observation that follows from our work is that for Monte-Carlo, for an arbitrary undirected graph and arbitrary edge arrival order, the time required to maintain $r$ random walks from every source node over $k$ arrivals is $O(r k)$.
This contrasts with a worst-case example constructed on directed graphs
where the total running time to maintain these walks grows super-linear in $k$ \cite{rw:worstcase}.
Our observation suggests a weaker yet still meaningful bound without the random edge assumption made in Bahmani et al.~\cite{vldb:walkUpd}
for Monte-Carlo methods on undirected graphs.

\section{Preliminaries}

Let $G = (V, E)$ be an unweighted directed graph.
Let $A$ denote the adjacency matrix of $G$ and
let $D$ denote the diagonal matrix representing the outdegrees of all vertices in $G$.
For a vertex $v$, let $\N^{\dout}(v)$ denote the set of out neighbors of $v$
and let $\N^{\din}(v)$ denote the set of in neighbors of $v$.
The personalized PageRank vector $\pivec_s$ for a source node $s$ is defined as the unique solution of the following linear system (\cite{99:pr}):
\begin{equation}\label{eq:ppr}
	\pivec_s = \alpha \cdot \vec e_s + (1 - \alpha) \cdot A^\intercal D^{-1} \pivec_s
\end{equation}
where $\alpha$ (a.k.a. the teleport probability) is a constant between $0$ and $1$,
and $\vec {e_s}$ is the indicator vector with a single nonzero entry of $1$ at $s$.
For a pair of vertices $s$ and $t$ on $G$,
we will use $\ppr s t$ to denote the personalized PageRank from $s$ to $t$.
This linear algebraic definition of personalized PageRank is equivalent to simulating a random walk.
Start from the source node $s$,
with probability $(1 - \alpha)$, go to a uniformly chosen neighbor of the current node,
or with probability $\alpha$ stop at the current node:
$\ppr s t$ is the probability that a random walk from $s$ stops at $t$~\cite{99:pr}.
In the above definitions, we assumed that the outdegree of every vertex is at least one for simplicity.
If a vertex has no out neighbors, then the random walk transitions to $s$ with probability $(1 - \alpha)$
(cf. Gleich~\cite{survey:gleich} for other ways to handle dangling nodes).

\subsection{The edge arrival model}\label{sec:model}

In the dynamic edge arrival model, we start with an initial graph and there is a sequence of edge updates one by one.
Let $G_0 = (V_0, E_0)$ denote the initial graph.
Let $\K$ denote the number of edge updates.
When the $i$-th edge $e_i = (u_i, v_i)$ arrives,
if $e_i$ is already in $G_{i-1}$, then it will be deleted; else it will be added to $G_{i-1}$.
Let $G_i = (V_i, E_i)$ denote the updated graph.
Note that $V_i$ can differ from $V_{i-1}$ by at most two vertices.
Let $d^{\dout}_i(u)$ denote the outdegree of $u \in V$ on $G_i$ and $d^{\din}_i(u)$ denote its indegree.
Also let $D_i$ denote the diagonal matrix of the outdegrees of $V_i$.
Let $\pi_i(s, t)$ denote the personalized PageRank from $s$ to $t$ on $G_i$.
Let $n = \abs V_0$ denote the number of vertices  and $m = \abs {E_0}$ denote the number of edges in the initial graph $G_0$.
We will analyze the following two edge arrival models:
	
	\paragraph{Random edge permutations of directed graphs~\cite{vldb:walkUpd}}
	To introduce this graph model in our notation,
	consider a uniformly random edge permutation of $E_{\K}$.
	Then $G_0$ is the subgraph consisting of the first $m$ edges in the permutation.
	And $e_i$ is defined as the $(m+i)$-th edge in this permutation,
	for $i = 1,\dots,\K$.
	In a uniformly random edge stream, the personalized PageRank vector does not change by ``too much''
	after an edge update~\cite{vldb:walkUpd}.
	This intuition is not true in the worst case ---
	there exists a sequence of edge insertions such that for a single source node,
	the aggregate $l_1$ difference between the personalized
	PageRank vector before and after every edge insertion
	grows superlinear in the number of of nodes~\cite{rw:worstcase}.

	\paragraph{Arbitrary edge updates of undirected graphs} For undirected graph, we observe a few 
	interesting properties for personalized PageRank, which will be used later.
	The first one exploits the fact that on an undirected graph,
	a random walk can be in either direction.

		\begin{proposition}[cf. Lemma $1$ in Lofgren et al.~\cite{waw:bid}]\label{prop:reversible}
			Let $G$ be an undirected graph. Let $s$ and $t$ be two vertices of $G$.
			Then $\ppr s t \times d(s) = \ppr t s \times d(t)$.
		\end{proposition}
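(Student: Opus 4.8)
The plan is to give a short linear-algebraic proof built on the closed form of the PPR vector, reducing the claim to the symmetry of a single matrix, and to note a combinatorial alternative. First I would rewrite \eqref{eq:ppr} in resolvent form: since $(1-\alpha)A^\intercal D^{-1}$ is column-substochastic scaled by $1-\alpha<1$, it has spectral radius at most $1-\alpha$, so $\pivec_s = \alpha\,(I-(1-\alpha)A^\intercal D^{-1})^{-1}\vec e_s$ is well-defined. Writing $M := (I-(1-\alpha)A^\intercal D^{-1})^{-1}$ and reading off coordinate $t$ gives $\ppr s t = \alpha\,M_{ts}$, and symmetrically $\ppr t s = \alpha\,M_{st}$. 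Hence the desired identity $\ppr s t\cdot d(s) = \ppr t s\cdot d(t)$ is equivalent to $M_{ts}\,d(s) = M_{st}\,d(t)$, i.e. to $(MD)_{ts} = (MD)_{st}$, so it suffices to prove that $MD$ is a symmetric matrix.

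Next I would compute $MD$ explicitly by pulling $D$ inside the inverse: $MD = (I-(1-\alpha)A^\intercal D^{-1})^{-1}D = \bigl(D^{-1}(I-(1-\alpha)A^\intercal D^{-1})\bigr)^{-1} = \bigl(D^{-1}-(1-\alpha)\,D^{-1}A^\intercal D^{-1}\bigr)^{-1}$. Here I invoke undirectedness: $A^\intercal = A$, so both $D^{-1}$ and $D^{-1}AD^{-1}$ are symmetric, hence so is $D^{-1}-(1-\alpha)D^{-1}AD^{-1}$, and therefore so is its inverse $MD$. That yields the proposition. (If one worries about dangling nodes, note $d(s),d(t)\ge 1$ is exactly the standing assumption of the Preliminaries, so $D^{-1}$ and the manipulations above are legitimate.)

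A more hands-on alternative expands $\ppr s t$ as the sum over all finite walks $s=v_0,v_1,\dots,v_\ell=t$ of the weight $\alpha(1-\alpha)^\ell\prod_{i=0}^{\ell-1}1/d(v_i)$; reversing a walk is a bijection onto walks from $t$ to $s$ under which the weight is multiplied by exactly $d(s)/d(t)$ — all internal degree factors cancel and only the endpoints $d(v_0),d(v_\ell)$ survive — so the identity holds term by term and then after summation. The only point that needs care in either version is the transpose bookkeeping in \eqref{eq:ppr}: because probability mass propagates through $A^\intercal D^{-1}$, the resolvent entry representing $\ppr s t$ is $M_{ts}$, not $M_{st}$, and correspondingly it is $MD$ (rather than $DM$ or $D^{-1}M$) that must come out symmetric. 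I would present the linear-algebra argument as the main proof, since it avoids the absolute-convergence check the walk-sum expansion needs and matches the algebraic style used elsewhere in the paper.
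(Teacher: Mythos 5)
Your argument is correct, and it is worth noting up front that the paper does not actually prove Proposition~\ref{prop:reversible}; it cites it as Lemma~1 of Lofgren et al.~\cite{waw:bid}, so there is no in-text proof to compare yours against. Your linear-algebra version is clean and handles the one subtle point correctly: because \eqref{eq:ppr} propagates mass through $A^\intercal D^{-1}$, the resolvent $M=(I-(1-\alpha)A^\intercal D^{-1})^{-1}$ encodes $\ppr s t$ as $\alpha M_{ts}$, so the claim is exactly the symmetry of $MD$, and pulling $D$ inside the inverse reduces this to the symmetry of $D^{-1}-(1-\alpha)D^{-1}AD^{-1}$, which uses $A=A^\intercal$ precisely where undirectedness should enter. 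The invertibility remark is also fine, since $AD^{-1}$ is column-stochastic and thus $(1-\alpha)AD^{-1}$ has spectral radius $1-\alpha<1$. The walk-reversal argument you sketch is the proof that appears most often in the literature (and, I believe, the style of proof in the cited reference): the internal degree factors cancel under reversal, leaving only the endpoint ratio $d(s)/d(t)$, and absolute convergence of the walk sum follows from the same spectral-radius bound, so either version can be presented as the main proof.
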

		Here we dropped the superscript on $d$ since there is no direction for undireced edges.
		\begin{proposition}\label{prop:updateBound}
			Let $G = (V, E)$ be an undirected graph and let $t$ be a vertex of $V$, then 
			$ \sum_{x \in V} \ppr x t / {d(t)} \le 1$.
		\end{proposition}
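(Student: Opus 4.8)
The plan is to combine the reversibility identity of Proposition~\ref{prop:reversible} with the elementary fact that a personalized PageRank vector, being the distribution of the stopping location of the teleporting random walk, has entries summing to $1$. In essence, reversibility lets me convert the sum $\sum_x \ppr x t / d(t)$ (over walks \emph{into} $t$) into the sum $\sum_x \ppr t x / d(x)$ (over walks \emph{out of} $t$), and the latter is at most $\sum_x \ppr t x = 1$ because each $d(x) \ge 1$.

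First I would rewrite each summand. Note the statement implicitly assumes $d(t) \ge 1$, since otherwise $\ppr x t / d(t)$ is undefined. For every $x$ with $d(x) \ge 1$, Proposition~\ref{prop:reversible} gives $\ppr x t \cdot d(x) = \ppr t x \cdot d(t)$, i.e.
\[
\frac{\ppr x t}{d(t)} = \frac{\ppr t x}{d(x)}.
\]
For $x \neq t$ with $d(x) = 0$, the vertex $x$ is isolated in the undirected graph $G$, so a walk started at $x$ never reaches $t$ and $\ppr x t = 0$; such terms contribute nothing. Hence
\[
\sum_{x \in V} \frac{\ppr x t}{d(t)} \;=\; \sum_{x \in V:\, d(x) \ge 1} \frac{\ppr t x}{d(x)}.
\]

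Second, I would simply discard the denominators: every $x$ in the remaining sum has $d(x) \ge 1$, so $\ppr t x / d(x) \le \ppr t x$, and therefore
\[
\sum_{x \in V} \frac{\ppr x t}{d(t)} \;\le\; \sum_{x \in V} \ppr t x \;=\; 1,
\]
where the last equality holds because $\pivec_t$ is a probability distribution (equivalently, summing Equation~\eqref{eq:ppr} over all coordinates and using that each column of $A^\intercal D^{-1}$ sums to $1$ yields $\mathbf 1^\intercal \pivec_t = 1$). This gives the claim. The whole argument is two lines once Proposition~\ref{prop:reversible} is available, so I expect no real obstacle; the only points needing care are the implicit hypothesis $d(t) \ge 1$, the handling of isolated vertices, and verifying that the dangling-node convention from the Preliminaries does not let probability mass leak, so that $\sum_x \ppr t x$ is still exactly $1$.
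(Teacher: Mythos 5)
Your proof is correct and follows essentially the same route as the paper's: apply Proposition~\ref{prop:reversible} to rewrite $\ppr x t / d(t)$ as $\ppr t x / d(x)$, drop the denominators using $d(x)\ge 1$, and conclude since $\sum_x \ppr t x = 1$. The extra care you take about isolated vertices and the implicit hypothesis $d(t)\ge 1$ is sensible but not a departure in method.
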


		\begin{proof}
			Note that,
			\[\sum_{x \in V} \frac {\ppr x t} {d(t)}
				= \sum_{x \in V} \frac {\ppr t x} {d(x)} \le \sum_{x\in V} \ppr t x = 1. \]
			where we used Proposition~\ref{prop:reversible}.
		\end{proof}

\subsection{Local push algorithms}
The random walk interpretation of personalized PageRank leads to the following general
class of local computation algorithms:
start with all the probability mass on the source node of the graph,
and iteratively push this mass out to neighboring nodes,
getting progressively better approximations.
In these ``local push'' algorithms, we typically maintain a ``residual'' at every node,
which is mass that has been received but not yet been pushed out from that node,
as well as an ``estimate'', which is mass that has been both received and pushed out
(after accounting for the teleport probability and the outdegrees), we will now
precisely describe how this local approach can be used in the forward and reverse
direction.

\subsubsection{Forward push}

Given a source node $s$, the forward push algorithm maintains an estimate $\fvec(s, t)$ of $\ppr s t$ for each target $t \in V$.
It also maintains a residual $\gvec(s, t)$ for $t$.
Let $\fvec(s) = \fvec(s, \cdot)$ denote the the vector of estimates and
$\gvec(s)$ the vector of residuals.
The estimates and residuals satisfy the following
invariant property (cf. Section $3$ in Lofgren~\cite{focs:chung}):
	\begin{equation}\label{eq:fwdReisdualForm}
		\ppr s t = \fvec(s,t) + \sum_{x \in V} \gvec(s, x) \times \ppr x t, \forall t \in V.
	\end{equation}
Algorithm~\ref{alg:lpFwd} described below is a variant of the classic forward local push algorithm \cite{focs:chung},
the difference being that we added negative residuals.
As we will see later on, an edge arrival can result in negative residuals ---
the edge update affects the amount of residuals a vertex have pushed out.
During a forward push iteration for vertex $u$ (Step~\ref{step:fwdPush} in Algorithm~\ref{alg:lpFwd}),
an $\alpha$ fraction of $u$'s residual will be added to $u$'s estimate.
For the rest of $(1-\alpha)$ fraction, each out neighbor of $u$ receives an equal proportion
of $(1-\alpha) / d^{\dout}(u)$.
Algorithm~\ref{alg:lpFwd} repeatedly perform forward push iterations,
first for vertices with positive residuals and then for vertices with negative residuals,
until for every vertex, its residual divided by the outdegree is within $-\rmax$ and $\rmax$.
It's clear that while we work on vertices with negative residuals,
no vertices whose residual is positive and below $\rmax$ will increase above $\rmax$.

\newcommand{\alglpFwd}{{\sc ForwardLocalPush}}
\begin{algorithm}
	\small
	\caption{\alglpFwd}
	\label{alg:lpFwd}
	\begin{algorithmic}[1]
		\Input ($s, \fvec(s), \gvec(s), G, \rmax$)
		\While {$\max_u \frac {\gvec(s, u)} {d^{\dout}(u)} > \rmax$}
			\State FwdPush(u)
		\EndWhile
		\While {$\min_u \frac {\gvec(s, u)} {d^{\dout}(u)} < -\rmax$}
			\State FwdPush(u)
		\EndWhile\\
		\Return ($\fvec(s), \gvec(s)$)
		\vspace{0.05in}
		\Procedure{FwdPush}{$u$} \label{step:fwdPush}
			\State $\fvec(s, u) \eqplus \alpha \times \gvec(s, u)$
			\For {$v \leftarrow u$}
				\State $\gvec(s, v) \mathrel{+}= (1 - \alpha) \times \gvec(s, u) / d^{\dout}(u)$
			\EndFor
			\State $\gvec(s, u) \gets 0$
		\EndProcedure
	\end{algorithmic}
\end{algorithm}

\subsubsection{Reverse push}\label{sec:revPush}

Given a target node $t$,
the reverse push algorithm maintains an estimate $\bvec(s, t)$ of $\pi(s, t)$, for every node $s$ in $G$.
It also maitains a residual $\rvec(s, t)$ for $s$.
Let $\bvec(t) = \bvec(\cdot, t)$ denote the vector of estimates and $\rvec(t) = \rvec(\cdot, t)$ for the vector of residuals.
Similar to forward push, estimates and residuals satisfy an invariant property \cite{waw:chung, kdd:fastppr}:
\begin{equation}\label{eq:residualForm}
	\ppr s t = \bvec(s, t) + \sum_{x \in V} \ppr s x \times \rvec(x, t), \forall s \in V.
\end{equation}
In Algorithm~\ref{alg:lp} below, we've also added negative residuals.
A reverse push iteration on vertex $u$ works backward (step~\ref{step:revPush}):
an $\alpha$ fraction of $u$'s residuals goes to $u$'s estimate;
to push back the other $1-\alpha$ fraction to an in neighbor $v$ of $u$, it takes into account
the outdegree of $v$, hence the amount is the residual amount times $(1-\alpha) / d^{\dout}(v)$.
Algorithm~\ref{alg:lp} keeps pushing residuals back from each vertex,
until the residual of every vertex is within $-\rmax$ and $\rmax$.
When this happens, it is guaranteed that $\abs {\ppr s t - P(s, t)} \le \rmax$ (cf. Theorem $1$ in Anderson et al.~\cite{waw:chung}).

\newcommand{\alglp}{{\sc ReverseLocalPush}}
\begin{algorithm}
	\caption{\alglp}\label{alg:lp}
	\small
	\begin{algorithmic}[1]
		\Input ($t, \bvec(t), \rvec(t), G, \rmax$)
		\While {$\max_u \rvec(u, t) > \rmax$} \label{step:pushPos1}
			\State RevPush(u) \label{step:pushPos2}
		\EndWhile
		\While {$\min_u \rvec(u, t) < -\rmax$} \label{step:pushNeg1}
			\State RevPush(u)	\label{step:pushNeg2}
		\EndWhile\\
		\Return ($\bvec(t), \rvec(t)$)
		\vspace{0.05in}
		\Procedure{RevPush}{$u$}\label{step:revPush}
			\State $\bvec(u, t) \eqplus \alpha \times \rvec(u, t)$
			\For {$v \rightarrow u$}
				\State $\rvec(v, t) \eqplus (1 - \alpha) \times \rvec(u, t) / d^{\dout}(v)$
			\EndFor
			\State $\rvec(u, t) \gets 0$
		\EndProcedure
	\end{algorithmic}
\end{algorithm}

\iftoggle{fullpaper}{
\subsection{Random walks}\label{sec:rw}
A well known method in dynamic settings uses random walks
to maintain the personalized PageRank vector.
The algorithm together with an analysis is first introduced by Bahmani et al.~\cite{vldb:walkUpd}.
The observation is that an edge update will only affect walks that visited the vertices of the edge.
As an example, suppose that edge $u \rightarrow v$ is inserted.
If a random walk visits $u$, it should go through edge $u \rightarrow v$ with probability 
$1 / (d^{\dout} + 1)$.
Therefore, the random walk can be reroute from $u$ by generating a new walk segment from $u$.
We repeat the rerouting procedure for each visit at $u$, unless the random walk has been
rerouted already.
The expected number of random walks that needs to be rerouted is given below.

\begin{proposition}[Bahman et al.~\cite{vldb:walkUpd}]\label{prop:walkUpd}
	Let $G = (V, E)$ be a directed graph and $s$ be a vertex of the graph.
	Let $w$ be a random walk from $s$.
	Now suppose that an edge $u \rightarrow v$ is inserted to $G$.
	Then the probability that $w$ needs to be updated is at most $\frac {\pi(s, u)} {\alpha \times (d^{\dout}(u) + 1)}$.
\end{proposition}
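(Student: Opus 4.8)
The plan is to reduce the statement to two ingredients: a combinatorial description of exactly when the update procedure touches $w$, and the standard identity relating the expected number of times $w$ visits $u$ to $\ppr s u$.

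First I would model $w$ cleanly. Extend $w$ to an infinite random walk $X_0=s,X_1,X_2,\dots$ (at each step moving to a uniformly random out-neighbor), and let $T$ be an independent geometric variable with $\Pr[T\ge\ell]=(1-\alpha)^\ell$; then $w$ is the prefix $X_0,\dots,X_T$, and $\ppr s t=\Pr[X_T=t]$. Let $N_u=\sum_{\ell\ge 0}\mathbf{1}[\ell\le T]\,\mathbf{1}[X_\ell=u]$ be the number of visits of $w$ to $u$. By the description of the rerouting procedure, each visit of $w$ to $u$ that occurs before $w$ has already been rerouted comes with an independent $\mathrm{Bernoulli}\bigl(1/(d^{\dout}(u)+1)\bigr)$ coin that decides whether to reroute --- this probability being exactly the chance that the step out of $u$ on the new graph uses the inserted edge $u\to v$, since $u$ then has $d^{\dout}(u)+1$ out-neighbors. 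Hence, conditioned on $N_u$, the walk $w$ is \emph{not} updated with probability $\bigl(1-1/(d^{\dout}(u)+1)\bigr)^{N_u}$, so
\[
	\Pr[w\text{ is updated}]=\E{1-\Bigl(1-\tfrac{1}{d^{\dout}(u)+1}\Bigr)^{N_u}}\le\frac{\E{N_u}}{d^{\dout}(u)+1},
\]
where the inequality is Bernoulli's inequality $1-(1-p)^k\le pk$.

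It remains to show $\E{N_u}=\ppr s u/\alpha$. Since $T$ is independent of the trajectory, $\E{N_u}=\sum_{\ell\ge 0}\Pr[T\ge\ell]\Pr[X_\ell=u]=\sum_{\ell\ge 0}(1-\alpha)^\ell\,p_\ell(s,u)$, where $p_\ell(s,u)=\bigl[(A^\intercal D^{-1})^\ell\vec e_s\bigr]_u$ is the $\ell$-step transition probability from $s$ to $u$. On the other hand, unrolling the linear system~\eqref{eq:ppr} into its Neumann series gives $\pivec_s=\alpha\sum_{\ell\ge 0}(1-\alpha)^\ell(A^\intercal D^{-1})^\ell\vec e_s$, i.e.\ $\ppr s u=\alpha\sum_{\ell\ge 0}(1-\alpha)^\ell p_\ell(s,u)=\alpha\,\E{N_u}$. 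Substituting $\E{N_u}=\ppr s u/\alpha$ into the displayed inequality yields the claimed bound $\Pr[w\text{ is updated}]\le\ppr s u/\bigl(\alpha(d^{\dout}(u)+1)\bigr)$.

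The calculations are routine; the only step requiring care is the modeling one. One must check that the ``unless already rerouted'' clause in the procedure does not affect \emph{whether} $w$ is ever touched (it only prevents double counting), so that the reroute decisions at the successive visits to $u$ behave like i.i.d.\ Bernoulli trials and the bound $1-(1-p)^{N_u}\le p\,N_u$ applies. The remaining content is the classical fact that the expected number of visits of a PageRank walk to a node equals that node's personalized PageRank value divided by $\alpha$.
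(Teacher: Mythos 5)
The paper cites Proposition~\ref{prop:walkUpd} from Bahmani et al.\ without reproducing a proof, so there is no in-paper argument to compare against; your reconstruction is correct and is essentially the standard argument from that reference. The key ingredients --- modeling $w$ as an infinite chain cut off by an independent geometric time $T$, noting that each visit to $u$ carries an independent $\mathrm{Bernoulli}\bigl(1/(d^{\dout}(u)+1)\bigr)$ reroute coin and that $w$ is touched iff at least one coin succeeds, applying $1-(1-p)^{N_u}\le p\,N_u$, and the Neumann-series identity $\E{N_u}=\ppr s u/\alpha$ --- are all correct and assembled in the right order. Your remark that the ``unless already rerouted'' clause affects only the count of reroutes, not the event that $w$ is ever touched, is exactly the point that needs care, and you handle it properly: conditioned on the trajectory (hence on $N_u$), the survival probability is $(1-p)^{N_u}$ regardless of how the procedure stops checking after a first success.
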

}

\section{Dynamic local push algorithms}

Given a node, we can run push algorithms to initialize its data structures that include the estimates and residuals.
If an edge is inserted/deleted, 
we can obtain a new pair of estimates and residuals,
by adjusting them locally at the updated edge using the invariant equations. 
Since this adjustment can create residuals that gets above $\rmax$ or below $-\rmax$,
we then invoke Algorithm~\ref{alg:lpFwd} (or Algorithm~\ref{alg:lp}) to push residuals from such nodes.
While the idea is simple, the analysis of the number of push iterations is quite intricate.
For the rest of this section, we describe an approach to obtain dynamic local push algorithms and show
how to analyze their running time.
We will introduce a dynamic reverse push algorithm in the first part.
Then we will introduce a dynamic forward push algorithm in the second part.
Finally we will consider how to extend the algorithms to handle node arrivals and other issues.

\subsection{Reverse push}

Let $t$ be a (target) vertex of $G$.
Let $\rmax$ be a parameter between $0$ and $1$.
Our goal is to maintain a pair of estimates and residuals, denoted by $\bvec(t)$ and $\rvec(t)$,
such that $\norm {\rvec(t)}_{\infty} \le \rmax$.
As mentioned in Section~\ref{sec:revPush}, this will ensure that
$\abs {\bvec(s, t) - \ppr s t} \le \rmax$, for any node $s \in V$.
We start with an equivalent formulation of Equation~\ref{eq:residualForm}.

\begin{lemma}\label{lem:invariance}
Equation \eqref{eq:residualForm} implies
	\shrink{\begin{equation*}\label{eq:invariance}
		\bvec(s, t) + \alpha \cdot \rvec(s, t) =
		\sum_{x \in N^{\dout}(s)} (1 - \alpha) \times \frac {\bvec(x, t)} {d^{\dout}(s)} + \alpha \times \mathbf 1_{s=t},~\forall s \in V,
	\end{equation*}}
and vice versa.
\end{lemma}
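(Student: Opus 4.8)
The plan is to show the two equations are equivalent by substituting the defining recursion for $\ppr s t$ into the invariant \eqref{eq:residualForm}. Recall from \eqref{eq:ppr} that for every node $s$,
\[
	\ppr s t = \alpha \cdot \mathbf 1_{s=t} + (1-\alpha) \sum_{x \in N^{\dout}(s)} \frac{\ppr x t}{d^{\dout}(s)},
\]
where I have written out the $(s,t)$ coordinate of the linear system (using that $\pivec_t$ plays the role of a personalized PageRank vector "to" $t$, or equivalently expanding the random-walk interpretation one step from $s$). The idea is to take \eqref{eq:residualForm}, which reads $\ppr s t = \bvec(s,t) + \sum_{x} \ppr s x \, \rvec(x,t)$, and to apply the one-step recursion for $\ppr s t$ on the left-hand side and for each $\ppr s x$ appearing on the right-hand side (here it is cleaner to expand the first step of the walk from $s$, i.e. $\ppr s x = \alpha \mathbf 1_{s=x} + (1-\alpha)\sum_{y \in N^{\dout}(s)} \ppr y x / d^{\dout}(s)$).

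First I would substitute and collect the $\alpha$-terms. On the left we get $\alpha \mathbf 1_{s=t}$; on the right, from the $\alpha \mathbf 1_{s=x}$ part of each $\ppr s x$, we get $\alpha \, \rvec(s,t)$. The remaining terms on both sides carry a factor $(1-\alpha)/d^{\dout}(s)$ times a sum over $x \in N^{\dout}(s)$. On the left that sum is $\sum_{x \in N^{\dout}(s)} \ppr x t$; on the right it is $\sum_{x \in N^{\dout}(s)} \sum_{y} \ppr y x \, \rvec(y,t)$, but by \eqref{eq:residualForm} applied at node $x$ we have $\sum_y \ppr y x \, \rvec(y,t) = \ppr x t - \bvec(x,t)$ — wait, the indices need care: \eqref{eq:residualForm} with source $x$ reads $\ppr x t = \bvec(x,t) + \sum_y \ppr x y\, \rvec(y,t)$, so I actually want to expand $\ppr s x$ and then resum differently. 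The clean route is: start from \eqref{eq:residualForm} at $s$, replace $\ppr s t \to$ recursion, replace every $\ppr s x \to$ recursion, then regroup the double sum over out-neighbors of $s$ and recognize an inner copy of \eqref{eq:residualForm} at each out-neighbor. After cancellation of the common $(1-\alpha)/d^{\dout}(s)$ factor and the matched $\ppr\cdot t$ terms, what survives is exactly the claimed identity $\bvec(s,t) + \alpha\,\rvec(s,t) = \sum_{x\in N^{\dout}(s)}(1-\alpha)\bvec(x,t)/d^{\dout}(s) + \alpha\,\mathbf 1_{s=t}$.

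For the converse, I would run the substitution backwards: assuming the displayed identity holds for all $s$, define the discrepancy $\delta(s) = \bvec(s,t) + \sum_x \ppr s x\,\rvec(x,t) - \ppr s t$ and show $\delta \equiv 0$. Plugging the identity in shows $\delta$ satisfies a homogeneous equation of the form $\delta(s) = (1-\alpha)\sum_{x\in N^{\dout}(s)}\delta(x)/d^{\dout}(s)$, i.e. $\delta = (1-\alpha) A^{\intercal}D^{-1}\delta$ in matrix form; since the spectral radius of $(1-\alpha)A^{\intercal}D^{-1}$ is at most $1-\alpha < 1$, the only solution is $\delta = 0$, which is \eqref{eq:residualForm}.

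The main obstacle I anticipate is purely bookkeeping: keeping the two free indices ($s$ as the source being recursed, $x$ as the out-neighbor, and the dummy $y$ inside the invariant applied at $x$) straight so that the inner sum is correctly recognized as another instance of \eqref{eq:residualForm} and the telescoping cancellation is transparent. There is no analytic difficulty — the convergence argument for the converse is standard — so the write-up should be a careful but routine index manipulation, with the one subtlety being the dangling-node / $d^{\dout}(s)\ge 1$ convention already fixed in the preliminaries.
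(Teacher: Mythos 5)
Your proposal is correct, but takes a genuinely different route from the paper's. The paper argues at the matrix level: it sets $\Pi = (I - (1-\alpha)D^{-1}A)/\alpha$, notes $\pivec^t = \Pi^{-1}\vec e_t$, writes Equation~\eqref{eq:residualForm} in vector form as $\pivec^t = \bvec(t) + \Pi^{-1}\rvec(t)$, and left-multiplies by $\Pi$ to get $\vec e_t = \Pi\bvec(t) + \rvec(t)$, which unpacks to the displayed identity; since every step is an iff (using invertibility of $\Pi$), the converse comes for free. You instead work entrywise: you expand the first random-walk step from $s$ in both $\ppr s t$ and $\ppr s x$, regroup the double sum over $N^{\dout}(s)$, and recognize an inner copy of \eqref{eq:residualForm} at each out-neighbor, after which the $\pi(\cdot,t)$ terms cancel and the claimed identity remains. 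For the converse you introduce a discrepancy $\delta$, show it is a fixed point of $(1-\alpha)D^{-1}A$, and invoke that this matrix has spectral radius $1-\alpha<1$. Both arguments work and both ultimately rest on the same fact that $\pivec^t$ is the unique solution of its linear system; the paper packages this as invertibility of $\Pi$, while you unpack it into a contraction argument. The paper's version is a few lines and symmetric in both directions; yours is more elementary (no inverse is formed) but more index bookkeeping, and your converse needs the separate spectral-radius step, whereas the paper's $\Leftrightarrow$ chain handles both directions at once. Either would be acceptable; if you write yours up, be explicit that the uniqueness/contraction step is what makes the converse go through.
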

\begin{proof}
	Let $\pivec^t = \pi(\cdot, t)$ denote the vector with personalized PageRank from every node of $G$ to $t$.
	Let $\Pi = (I - (1-\alpha) D^{-1} A) / \alpha$.
	It follows from the work of Haveliwala~\cite{03:Hav} that
	$\Pi$ is invertible,
	and $\pivec^t = \Pi^{-1} \cdot \vec e_t$.
	This implies that the $(s, t)$-th entry of $\Pi^{-1}$ is equal to $\ppr s t$,
	since the $s$-th entry of $\pivec^t$ is $\ppr s t$,
	Now, we can write Equation \eqref{eq:residualForm} in vector form:
	\shrink{\begin{align*}
										& ~\pivec^t = \bvec(t) + \Pi^{-1} \cdot \rvec(t) \\
		\Leftrightarrow&~ \Pi \cdot \pivec^t = \Pi \cdot \bvec(t) + \rvec(t) \\
		\Leftrightarrow&~ \vec e_t = \Pi \cdot \bvec(t) + \rvec(t) \\
		\Leftrightarrow&~ \bvec(t) + \alpha \times \rvec(t) = (1-\alpha) D^{-1}A \cdot \bvec(t) + \alpha \times \vec e_t
	\end{align*}}
\end{proof}

Hence it follows if one can maintain the equivalent invariant
in Lemma~\ref{lem:invariance},
then one obtains a feasible pair of estimates and residuals
for the updated graph.
And the question becomes how to maintain the invariant between the estimate and the residual vectors.
We will only do it for edge insertions,
and it's not hard to work out the details for edge deletions using the same approach.
Suppose that an edge $u \rightarrow v$ is inserted into $G$.
Then the only vertex that doesn't satisfy the invariant in Lemma~\ref{lem:invariance} is $u$.
In this case, it suffices to update $\rvec(u, t)$ without changing any entries of $\bvec(t)$,
because the only variable that has been changed is $u$'s outdegree.
The precise formula is obtained
by calculating $u$'s correct new residual minus its old residual amount.
To simplify the expression below, we take out a common factor of $1 / \alpha$,
which comes from dividing the $\alpha$ multiplier of $\rvec(u, t)$ as in Lemma~\ref{lem:invariance}:
\shrink{\begin{align*}
	&	\left(\sum_{x \in \N^{\dout}(u)} \frac{(1-\alpha) \times \bvec(x, t)} {d^{\dout}(u)+1} \right)
		+ \frac {(1-\alpha) \times \bvec(v, t)} {d^{\dout}(u)+1} \\
	&	+ \alpha \times \mathbf 1_{u=t} - \bvec(u, t) \\
	& - \left(\sum_{x \in \N^{\dout}(u)} \frac{(1-\alpha) \times \bvec(x, t)} {d^{\dout}(u)} \right)
		- \alpha \times \mathbf 1_{u=t} + \bvec(u, t)\\
	& = \frac {(1-\alpha)\times \bvec(v, t)} {d^{\dout}(u) + 1} -
		\sum_{x \in \N^{\dout}(u)} \frac {(1-\alpha)\times \bvec(x, t)} {d^{\dout}(u) \times (d^{\dout}(u) + 1)} \\
	&= \frac {(1-\alpha)\times \bvec(v, t)} {d^{\dout}(u) + 1} -
		\frac {\bvec(u, t) + \alpha \times \rvec(u, t) - \alpha \times \mathbf 1_{u=t}} {d^{\dout}(u) + 1}
\end{align*}}%
Hence the insertion procedure in Algorithm~\ref{alg:lpUpdate} described below correctly generates
a pair of estimates and residuals for the updated graph.
Similarly for deletions.
%Note that if an edge arrival introduces a new node, then we will initialize the local push data structures
%for the new node using Algorithm~\ref{alg:lp}.
%Similarly, if an edge deletion results in an isolated node, then we can delete the data structure
%for that node.
It is worth mentioning that we haven't described how to handle nodes whose indegree is zero (also known as dangling nodes):
since they do not have in neighbors, their residuals cannot be pushed out.
We put off this issue until Section~\ref{sec:discussions}.
Another issue is, if one works with undirected graphs, one needs to apply the insert/delete procedure
for both direction of an edge.

\newcommand{\alglpUpd}{{\sc UpdateReversePush}}
\begin{algorithm}
	\small
	\caption{\alglpUpd}
	\label{alg:lpUpdate}
	\begin{algorithmic}[1]
		\Input ($t, \bvec(t), \rvec(t), u, v, G, \rmax$)
		\Require $G$ is a directed graph.
			Let $u \rightarrow v$ be the previous edge update, 
			with $G$ being the updated graph.
		\State Apply Insert/Delete to $\bvec(t)$ and $\rvec(t)$.\label{step:upd}
		\State \Return \alglp$(t, \bvec(t), \rvec(t), G, \rmax)$. \label{step:invokeRev}
		\vspace{0.05in}
		\Procedure{Insert}{$u, v$} \label{step:insertBk}
			\State $\rvec(u, t) \eqplus \frac {(1-\alpha) \times \bvec(v, t)
					- \bvec(u, t) - \alpha \times \rvec(u, t) + \alpha \times \mathbf 1_{u=t}} {d^{\dout}(u)} \times \frac 1 {\alpha} $ \label{step:ins}
		\EndProcedure
		\vspace{0.05in}
		\Procedure{Delete}{$u, v$}
			\State $\rvec(u, t) \eqminus \frac {(1-\alpha) \times \bvec(v, t)
					- \bvec(u, t) - \alpha \times \rvec(u, t) + \alpha \times \mathbf 1_{u=t}} {d^{\dout}(u)} \times \frac 1 {\alpha} $ \label{step:del}
		\EndProcedure
		\newline
	\end{algorithmic}
\end{algorithm}

For the rest of this section, our goal is presenting an analysis of Algorithm~\ref{alg:lpUpdate}.
Based on previous work of Bahmani et al.~\cite{vldb:walkUpd},
it is not difficult to observe that the updated amount of residuals from step \ref{step:ins} and \ref{step:del})
should be small in expectation in a random edge permutation model.
However, one can observe that there are already a lot of nonzero residuals in $\rvec(t)$.
While these residuals have all been reduced below $\rmax$,
it gets unclear if they couple with the updated amount of residuals.
Our intuition is to solve this problem with amortized analysis.

\begin{theorem}\label{thm:revPush}
	Let $\seq {G_i = (V_i, E_i)}$ be a sequence of $\K+1$ graphs such that each graph
	is obtained from the previous graph with one edge update.
	Let $\bar d$ denote the average degree of $G_0$.
	Let $t$ be a random vertex of $G_0$.
	Then the total running time of maintaining a reverse push solution $\bvec_i(t)$
	for each graph $G_i$ such that $\abs {\bvec_i(s, t) - \pi_i(s, t)} \le \rmax$,
	for any $s \in V_i$,
	using Algorithm~\ref{alg:lp} and~\ref{alg:lpUpdate},
	is at most $O(\K / \alpha + \K/(n\rmax\alpha^2) + \bar d / (\alpha\rmax))$ for the following two dynamic graph models:
		\begin{itemize}
			\item Arbitrary edge updates of an undirected graph.
			\item Random edge permutation of a directed graph,
				under the assumption that the indegree of every vertex of $G_0$ is at least one,
				and no new nodes are added during the edge arrivals.
		\end{itemize}
\end{theorem}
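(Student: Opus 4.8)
The plan is an amortized analysis driven by a nonnegative potential $\Phi$ equal to a constant times a \emph{weighted} total residual mass. I want $\Phi$ to have three properties: (a) every \textsc{RevPush} operation (performed by the loops of Algorithm~\ref{alg:lp}, in both the positive- and the negative-residual phase) is charged against the drop in $\Phi$ it causes; (b) at the static initialization on $G_0$ (where $\rvec(t)=\vec e_t$, $\bvec(t)=\vec 0$) one has $\operatorname{E}_t[\Phi]=O(\dbar/(\alpha\rmax))$ over a uniformly random target $t$; and (c) each of the $\K$ invariant-restoration steps (the \textsc{Insert}/\textsc{Delete} procedures of Algorithm~\ref{alg:lpUpdate}, line~\ref{step:ins}) raises $\Phi$ by only a small amount in expectation. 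Given these, the total push cost is at most $\Phi$ at initialization plus the total injected potential (since $\Phi\ge0$ at the end), and adding the $O(\K)$ cost of the constant-time restoration updates themselves yields the stated bound.

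For an undirected graph I take $\Phi=\frac{2}{\alpha\rmax}\sum_u d(u)\,|\rvec(u,t)|$. Property (a) is a one-line check: a push at $u$ with $\rvec(u,t)=r$, $|r|>\rmax$, removes $\frac{2}{\alpha\rmax}d(u)|r|$ and adds at most $\frac{2}{\alpha\rmax}\sum_{v\sim u}d(v)\cdot\frac{(1-\alpha)|r|}{d(v)}=\frac{2}{\alpha\rmax}(1-\alpha)|r|\,d(u)$, a net change of at most $-\frac{2}{\rmax}|r|\,d(u)<-2d(u)$, which dominates the $d(u)+1$ work done. Property (b) holds since $\Phi(G_0)=\frac{2}{\alpha\rmax}d_0(t)$ and $\operatorname{E}_t[d_0(t)]=\dbar$. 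For (c), inserting $\{u_i,v_i\}$ does two things: incrementing $d(u_i),d(v_i)$ raises $\Phi$ by at most $\frac{2}{\alpha\rmax}(|\rvec(u_i,t)|+|\rvec(v_i,t)|)\le\frac{4}{\alpha}$ (both residuals were $\le\rmax$ in the valid solution for $G_{i-1}$), which is the $O(\K/\alpha)$ term; then line~\ref{step:ins} changes $\rvec(u_i,t)$ by $\delta_i$, and substituting the invariant of Lemma~\ref{lem:invariance} for $G_{i-1}$ to cancel the $\bvec(u_i,t),\rvec(u_i,t),\mathbf 1_{u_i=t}$ terms collapses $\delta_i$ to $\frac{(1-\alpha)}{\alpha\,d_i(u_i)}\bigl(\bvec(v_i,t)-\operatorname{avg}_{x\in N(u_i)}\bvec(x,t)\bigr)$, so the induced rise in $\Phi$ is $O\!\bigl(\frac{1}{\alpha^2\rmax}(|\bvec(v_i,t)|+\operatorname{avg}_x|\bvec(x,t)|)\bigr)$. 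Taking the expectation over the uniform target and using $|\bvec(s,t)|\le\ppr s t+\rmax$ for a valid reverse-push solution together with $\sum_t\ppr s t\le1$ gives $\operatorname{E}_t|\bvec(s,t)|\le\frac1n+\rmax$ for every fixed $s$ (hence also for an average over out-neighbors), so the expected rise per update is $O\!\bigl(\frac{1}{\alpha^2 n\rmax}+\frac{1}{\alpha^2}\bigr)$. Summing over the $\K$ updates and $G_0$ gives $O(\K/\alpha+\K/(n\rmax\alpha^2)+\dbar/(\alpha\rmax))$, matching the theorem up to a harmless $\K/\alpha^2$ in place of $\K/\alpha$ (equal for constant $\alpha$). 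Deletions and node arrivals are handled identically.

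For the directed random-permutation model the degree-weighted potential no longer satisfies (a), since a push at $u$ now costs $d^{\din}(u)+1$ and sends mass backward weighted by $1/d^{\dout}(v)$, and $\sum_{v\to u}d^{\din}(v)/d^{\dout}(v)$ is not bounded. The fix is $\Phi=\frac{1}{\rmax}\sum_u W(u)\,|\rvec(u,t)|$ where $W$ is the exact ``cascade work'' of a unit residual, $W=(\mathbf d^{\din}+\mathbf 1)+(1-\alpha)A^{\intercal}D^{-1}W$, i.e.\ $W(u)=\frac1\alpha\sum_x\ppr x u\,(d^{\din}(x)+1)$; then (a) holds by construction, a push at $u$ dropping $\Phi$ by at least $\frac{|r|}{\rmax}(d^{\din}(u)+1)\ge d^{\din}(u)+1$, and (b) holds because $\operatorname{E}_t[W_0(t)]=\frac1{\alpha n}\sum_x(d^{\din}_0(x)+1)=\Theta(\dbar/\alpha)$. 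The difficulty is that $W$ is a \emph{global} function of the graph, so (i) each edge update perturbs $W(u)$ at many vertices, not just the endpoints, feeding the injected potential, and (ii) the injected potential from line~\ref{step:ins} is $\frac1\rmax W_i(u_i)|\delta_i|\le\frac{(1-\alpha)}{\alpha\rmax}\cdot\frac{W_i(u_i)}{d_i^{\dout}(u_i)}\,(|\bvec(v_i,t)|+\operatorname{avg}_x|\bvec(x,t)|)$, in which the factor $W_i(u_i)/d_i^{\dout}(u_i)$ is uncontrolled for a worst-case head $u_i$. This is exactly where the random-permutation assumption is essential: conditioned on $G_{i-1}$ the edge $e_i$ is uniform over the $\K-i+1$ edges not yet inserted, so one averages $W_i(u_i)/d_i^{\dout}(u_i)$ over that choice and telescopes over the permutation — in the spirit of Proposition~\ref{prop:walkUpd} and the Monte-Carlo analysis of Bahmani et al.\ — to bound $\operatorname{E}[\sum_i W_i(u_i)/d_i^{\dout}(u_i)]$, and analogously the total perturbation of $W$ in (i), by $O(\K)$ times a graph average; combined with $\operatorname{E}_t|\bvec(s,t)|\le\frac1n+\rmax$ as before, this yields the same bound. (The indegree-$\ge1$ and no-new-nodes hypotheses ensure $W$ is finite and $\pivec^t$, hence $\Phi$, is defined throughout.)

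The step I expect to be the main obstacle is precisely the coupling the authors flag: the residuals left over from earlier pushes are numerous (though each already $\le\rmax$) and must be shown not to interact badly with the mass a new edge injects. The potential argument decouples them, and it is clean for undirected graphs only because the potential depends on the graph solely through vertex degrees, which an edge update changes at just two vertices; for directed graphs the ``right'' potential is inherently global, and the crux is to use the random edge order both to average the cascade weights $W_i(u_i)/d_i^{\dout}(u_i)$ and to bound the accumulated drift of $W$ over the $\K$ updates — a more delicate telescoping than in the undirected case.
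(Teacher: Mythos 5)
Your approach is essentially the same as the paper's: you weight residuals by the "cascade work" vector and amortize push cost against the drop in this potential, injected mass per edge update against the expected estimate magnitude $\E_t|\bvec(s,t)|\le 1/n+\rmax$, and degree/weight drift against $O(\K)$. Your $W$ is exactly the paper's $\phivec_i(x)=\sum_s d_i^{\din}(s)\pi_i(s,x)$ (up to a $1/\alpha$ and a $+1$, the latter absorbed by the indegree-$\ge1$ hypothesis), and your potential $\frac{2}{\alpha\rmax}\sum_u d(u)|\rvec(u,t)|$ is the paper's $\|\phivec_i\cdot\rvec_i(t)\|_1/(\alpha\rmax)$ in the undirected specialization $\phivec_i\equiv d_i$. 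One genuine (minor) difference of bookkeeping: you certify progress per push by the direct drop of $\Phi$, whereas the paper charges pushes to the increase of the indegree-weighted estimate and then converts to residuals via the invariant and a monotonicity argument over the two phases; your version is a bit more elementary and avoids the phase-by-phase monotonicity lemma, at the price of tracking signed-vs-absolute residual changes carefully (which you do correctly). Your undirected argument is, modulo constants and the harmless $\K/\alpha$-vs-$\K/\alpha^2$ slack you already flag (which is also present in the paper's own final expression), essentially a complete and correct reproof of Part 1.

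For the directed random-permutation case, however, your write-up is a plan rather than a proof, and the gap is precisely where the real work lives. You correctly identify the two quantities to control in expectation: the cascade-weight-to-outdegree ratio $W_i(u_i)/d_i^{\dout}(u_i)$ (paper's Lemma~\ref{lem:rpDelta}, a one-line computation from $\Pr[u_i=x]\propto d_i^{\dout}(x)$ and $\sum_x\phivec_i(x)=|E_i|$), and the accumulated drift of $W$ across the $\K$ updates (paper's Lemma~\ref{lem:rpPur}). But the second one is not a "telescoping over the permutation" in any straightforward sense — the paper proves it by isolating the $+1$ change to $d^{\din}(v_i)$ and then invoking the $\ell_1$-sensitivity bound $\sum_t|\pi_i(s,t)-\pi_{i-1}(s,t)|\le \tfrac{2\,\pi_{i-1}(s,u_i)}{\alpha\, d_i^{\dout}(u_i)}$ (a consequence of the random-walk rerouting argument and Proposition~\ref{prop:walkUpd}), after which a second application of the $\E[\phivec/d^{\dout}]=1$ identity closes the bound. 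You gesture at Proposition~\ref{prop:walkUpd} "in spirit," but you neither state nor use the $\ell_1$-sensitivity inequality, and without it the claim that the total perturbation of $W$ is $O(\K)$ times a graph average is unsupported. That inequality is the substantive new ingredient the directed case needs, so as written Part~2 is incomplete.
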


Remark: the assumptions under the case of directed graphs is because our proof
does not deal with issues of bounding the running time of handling dangling nodes,
and this can happen a node of $G_0$ has no in neighbors,
or when a new node arrives, thus creating a node without in neighbors.

We prove this theorem in three steps.
First of all, we derive a bound on the running time of Algorithm~\ref{alg:lp}.
Secondly, we present a bound on the running time of Algorithm~\ref{alg:lpUpdate}
and derive the total running time.
Finally, we bound the total running time using properties of the graph model.
Due to space limit, we only include the proofs for the case of undirected graphs,
and refer the reader to the full version for proofs of the other case.

We start with the running time of Algorithm~\ref{alg:lp}.
Lemma~\ref{lem:initCost} below is a Corollary of Theorem $2$ in the work of Lofgren et al.~\cite{kdd:fastppr},
the difference being that we subtracted a part that corresponds to the total cost of pushing all the remaining residuals,
since these residuals have not yet been pushed out.
As we will see later on, this part naturally arises in dynamic settings.
Let
	\shrink{\begin{align*}
		(\bvec_0(t), \rvec_0(t)) \triangleq \mbox{\alglp}(t, \vec e_t, \mathbf 0, G_0, \rmax),
	\end{align*}}%
and
	\begin{align}\label{eq:phi}
		& \phivec_i(x) \triangleq \sum_{s \in V_i} d^{\din}_i(s) \times \pi_i(s, x), \mbox{ and} \\
		& \phivec_i \triangleq \phivec_i(\cdot), \mbox{ its vector form}
	\end{align}

\begin{lemma}\label{lem:initCost}
	The running time of Algorithm \ref{alg:lp} is at most:
	\shrink{
	\begin{equation}\label{eq:initCost}
		\frac {\phivec_{\tn 0}(t) - \norm {\phivec_0 \cdot \rvec_0(t)}_1} {\alpha\rmax}.
	\end{equation}}%
\end{lemma}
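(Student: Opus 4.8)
The plan is to track a nonnegative potential that decreases by a fixed amount on every push and to express the initial value of that potential in terms of $\phivec_0(t)$. The natural choice is a weighted $\ell_1$ norm of the current residual vector, where node $x$ is weighted by $\phivec_0(x) = \sum_s d^{\din}_0(s)\,\pi_0(s,x)$ --- i.e. the potential is $\Phi(\rvec) \triangleq \sum_x \phivec_0(x)\,|\rvec(x,t)|$ when $\rvec \ge 0$ (and more care is needed with signs; see below). I would first establish that a single {\sc RevPush}$(u)$ on a node with $\rvec(u,t) > \rmax$ costs $O(d^{\din}_0(u))$ (it touches each in-neighbor once), and that it changes the potential by exactly $-\alpha\,\phivec_0(u)\,\rvec(u,t)$ plus the contributions redistributed to in-neighbors. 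The key identity to exploit is that $\phivec_0$ is a left eigen-type quantity: because $\phivec_0(x) = \sum_s d^{\din}_0(s)\pi_0(s,x)$ and $\pi$ satisfies \eqref{eq:ppr}, one gets $\phivec_0(u) = \alpha\,d^{\din}_0(u) + (1-\alpha)\sum_{x \in \N^{\dout}(u)} \phivec_0(x)/d^{\dout}(u)$ for every $u$ (here using that the indegree sum equals the total edge count; on an undirected graph $d^{\din} = d^{\dout} = d$). Plugging this into the change of $\Phi$ caused by {\sc RevPush}$(u)$, the $(1-\alpha)\sum_{x}\phivec_0(x)/d^{\dout}(u)$ mass that leaves $u$ and lands on its in-neighbors is precisely accounted for, so the net drop in $\Phi$ per push of a node with residual $>\rmax$ is at least $\alpha\,d^{\din}_0(u)\,\rmax$. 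Since the push costs $O(d^{\din}_0(u))$, this charges the work to a potential drop of at least $\alpha\rmax$ per unit of work, which is exactly the denominator in \eqref{eq:initCost}.

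Next I would account for the initial and final potential values. Initially $\rvec_0$ starts at $\mathbf 0$ before the very first static computation --- but the Lemma is about \mbox{\alglp} run on $(\vec e_t, \mathbf 0)$ from $G_0$, so actually the residual starts at $\mathbf 0$ and grows. Here the right framing is to run the argument ``backwards'': think of the algorithm as transporting an initial residual $\vec e_t$ of potential $\phivec_0(t)$ down to a final residual $\rvec_0(t)$ of potential $\norm{\phivec_0 \cdot \rvec_0(t)}_1$, with every push consuming $\ge \alpha\rmax$ units of work-weighted potential. More precisely: maintain the invariant \eqref{eq:residualForm}, so $\sum_x \pi_0(s,x)\rvec(x,t)$ is conserved up to what has moved into $\bvec$; the quantity $\Phi_t \triangleq \sum_x \phivec_0(x)\,\rvec(x,t)$ (signed) is \emph{invariant} under any push by the eigen-identity above, starting at $\phivec_0(t)\cdot 1$... so the decreasing potential must instead be the one-sided part. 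The clean way: let the potential be $\sum_{x:\rvec(x,t)>0}\phivec_0(x)\rvec(x,t)$ during the positive-push phase; each positive push of a node with $\rvec(u,t)>\rmax$ strictly decreases it by $\ge \alpha\,d^{\din}_0(u)\,\rmax \ge \alpha\rmax$ (after the in-neighbor redistribution is cancelled by the identity, only the $\alpha\,d^{\din}_0(u)\,\rvec(u,t)$ term that moved into $\bvec(u,t)$ is lost), it never goes negative, it starts at $\phivec_0(t)$, and it ends $\ge \norm{\phivec_0\cdot\rvec_0(t)}_1$-minus-the-negative-part; telescoping over all pushes gives the bound. I'd handle the negative-push phase symmetrically, noting (as the paper already observes) that negative pushes don't revive positive residuals above $\rmax$, so the two phases can be bounded separately and then combined.

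The main obstacle I expect is the bookkeeping around \emph{signs} and around what ``$\norm{\phivec_0 \cdot \rvec_0(t)}_1$'' is doing in the statement --- getting the potential-conservation identity ($\Phi_t$ invariant) to coexist with a strictly-decreasing one-sided potential requires carefully separating the mass pushed into the estimate $\bvec$ from the mass redistributed to in-neighbors, and confirming that the final residual's $\phivec_0$-weighted $\ell_1$ mass is exactly what is ``left over'' and hence should be subtracted. A secondary subtlety is justifying the eigen-identity $\phivec_0(u) = \alpha\,d^{\din}_0(u) + (1-\alpha)\sum_{x\in\N^{\dout}(u)}\phivec_0(x)/d^{\dout}(u)$: this is just the definition of $\phivec_0$ paired against \eqref{eq:ppr}, but one must be careful that it holds coordinatewise and that on directed $G_0$ every node has indegree $\ge 1$ (an assumption the theorem already makes) so the weights are strictly positive and no node is ``invisible'' to the potential. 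Once those two points are nailed down, the rest is the routine telescoping charging argument sketched above.
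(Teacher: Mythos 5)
Your potential-function plan is, at its core, the same charging argument the paper uses, but your write-up has a real error and an internal contradiction, both of which the paper's more direct route sidesteps. The ``eigen-identity'' you write for $\phivec_0$ is wrong: the sum should run over the \emph{in-neighbors} of $u$, and each term should be divided by that neighbor's own out-degree. The correct recursion, obtained by multiplying the coordinate form of \eqref{eq:ppr} by $d^{\din}_0(s)$ and summing over $s$, is
\[
	\phivec_0(u) \;=\; \alpha\,d^{\din}_0(u) \;+\; (1-\alpha)\sum_{w \in \N^{\din}(u)} \frac{\phivec_0(w)}{d^{\dout}(w)}.
\]
Your version (sum over $\N^{\dout}(u)$, uniform denominator $d^{\dout}(u)$) is the recursion for the estimate vector $\bvec$ in Lemma~\ref{lem:invariance}, not for $\phivec_0$; plugging it into your computation of $\Delta\Phi$ does not cancel the mass redistributed to in-neighbors (each of which receives $(1-\alpha)\rvec(u,t)/d^{\dout}(v)$ with a $v$-dependent denominator), so the telescoping would not close. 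Relatedly, you first compute (correctly) that a push at $u$ drops the signed quantity $\Phi_t = \sum_x \phivec_0(x)\rvec(x,t)$ by exactly $\alpha\,d^{\din}_0(u)\,\rvec(u,t)$, and then assert that $\Phi_t$ ``is invariant''; it is not --- what is conserved is $\Phi_t + \sum_s d^{\din}_0(s)\,\bvec(s,t)$, and that conservation is exactly what makes the telescoping sum equal $\phivec_0(t) - \norm{\phivec_0\cdot\rvec_0(t)}_1$. The sign concerns are also moot for this particular lemma: the initial run starts from residual $\vec e_t \ge 0$ and never creates a negative residual, so the negative-push phase never fires and the one-sided potential is the signed one.

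The paper avoids the $\phivec_0$-recursion entirely. It notes that every push at $s$ increases $\bvec_0(s,t)$ by at least $\alpha\rmax$ and touches $d^{\din}_0(s)$ in-neighbors, so the total work is at most $\sum_s d^{\din}_0(s)\,\bvec_0(s,t)/(\alpha\rmax)$; it then substitutes the invariant \eqref{eq:residualForm} to rewrite $\bvec_0(s,t)$ as $\pi_0(s,t) - \sum_x \pi_0(s,x)\rvec_0(x,t)$ and rearranges, which lands on \eqref{eq:initCost} in one line. That is the same arithmetic you are attempting, but with the ``eigen-identity'' absorbed into a single substitution of the invariant, so there is nothing coordinatewise to derive or verify. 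If you fix the identity and drop the ``invariant'' claim, your route works; the paper's is simply shorter.
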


\begin{proof}
To see this, note that every time a node pushes, its estimate increases by
$\alpha \rmax$,	hence the total cost of Algorithm~\ref{alg:lp} is bounded by:
\shrink{
\begin{align*}
	& \sum_{s \in V_i} \frac {d^{\din}_{\tn 0}(s) \times \bvec_0(s, t)} {\alpha \rmax} \\
	=& \frac {\sum_{s \in V_i} d^{\din}_{\tn 0}(s) \times (\pi_{\tn 0} (s, t)  - \sum_{x \in V_i} \pi_{\tn 0} (s, x) \times \rvec(x, t))} {\alpha\rmax} \\
	=& \frac {\phivec_{\tn 0}(t) - \norm {\phivec_0 \cdot \rvec_0(t)}_1} {\alpha\rmax}
\end{align*}}
\end{proof}

Now consider the $i$-th edge update, for $i = 1,\dots,\K$.
That is, we have had the updated graph $G_i$, after updating $G_{i-1}$ with $e_i = u_i \rightarrow v_i$.
Then we run Algorithm~\ref{alg:lpUpdate} to update $\bvec_{i-1}(t)$ and $\rvec_{i-1}(t)$.
Let
\shrink{\begin{align*}
		&	(\bvec_i(t), \rvec_i(t))\triangleq \\
		&	\quad \mbox{\alglpUpd}	(t, \bvec_{i-1}, \rvec_{i-1}(t), u_i, v_i, G_i, \rmax).
\end{align*}}%
For simplicity, let $\Delta_i(t)$ denote the updated amount of residuals for this edge update.
That is, if we are inserting $e_i$, then from step~\ref{step:ins} of Algorithm~\ref{alg:lpUpdate},
$\Delta_i(u_i, v_i, t)$ is defined to be $1/\alpha$ times:
\shrink{\begin{equation*}\label{eq:l1Change}
	\left|\frac {(1-\alpha) \times \bvec_{i-1}(v_i, t) - \bvec_{i-1}(u_i, t) -
			\alpha \times \rvec_{i-1}(u_i, t) + \alpha \times \mathbf 1_{u_i = t}}
			{d^{\dout}_{i}(u_i)} \right|
\end{equation*}}
and it could be similarly defined for deletion.

Lemma~\ref{lem:amortize} is the heart of our amortized analysis.
The key observation is that one can take care of the cost of an edge update,
by comparing the amount of residuals that we pushed out,
to the amount of new residuals that get created.
Note that the difference between these two masses is precisely the amount of
mass that has been received into the estimates.
Another useful property of push algorithms is monotonicity:
this property has played a crucial role in all the running time analysis
of push algorithms.
As long as we only push positive residuals or only negative residuals,
then the estimates will only change in one way but not the other.
This ensures that we could use estimates as a potential function to
bound the number of push operations a vertex does.

\begin{lemma}\label{lem:amortize}
	The running time of Algorithm~\ref{alg:lpUpdate} for updating $\bvec_{i-1}(t)$ and $\rvec_{i-1}(t)$
	for $G_i$ is at most
		\shrink{
		\begin{align*}
			& \frac {\phivec_{i}(u_i) \times \Delta_i(u_i, v_i, t)} {\alpha\rmax}
			+ \frac {\norm {\phivec_i - \phivec_{i-1}}_1} {\alpha} \\
			&+ \frac {\norm {\phivec_{i-1} \cdot \rvec_{i-1}(t)}_1 - \norm {\phivec_{i} \cdot \rvec_{i}(t)}_1} {\alpha\rmax}
		\end{align*}}
\end{lemma}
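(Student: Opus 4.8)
The plan is to analyze Algorithm~\ref{alg:lpUpdate} through the potential
\[
  M(\rvec) \;\triangleq\; \sum_{x \in V_i} \phivec_i(x)\,\abs{\rvec(x,t)} \;=\; \norm{\phivec_i \cdot \rvec}_1 ,
\]
where throughout round $i$ every quantity ($d^{\dout}$, $d^{\din}$, $\phivec$, \alglp, \ldots) refers to the updated graph $G_i$. Two ingredients drive the argument. The first is the identity
\[
  \phivec_i(u) \;-\; (1-\alpha)\sum_{v \rightarrow u} \frac{\phivec_i(v)}{d^{\dout}_i(v)} \;=\; \alpha\, d^{\din}_i(u), \qquad \forall u \in V_i ,
\]
obtained by substituting the entrywise form of \eqref{eq:ppr} on $G_i$, namely $\pi_i(s,u) = \alpha \mathbf 1_{s=u} + (1-\alpha)\sum_{v \rightarrow u} \pi_i(s,v)/d^{\dout}_i(v)$, into $\phivec_i(u) = \sum_s d^{\din}_i(s)\pi_i(s,u)$ and swapping the order of summation. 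The second is that $M$ is a valid progress measure for the push phase. I split the running time of Algorithm~\ref{alg:lpUpdate} into the $O(1)$ cost of the residual correction in Step~\ref{step:upd} (negligible, and absorbed into the stated bound) and the cost of the \alglp call in Step~\ref{step:invokeRev}, and I bound the latter by how far $M$ falls during it, offset against how far $M$ can jump in Step~\ref{step:upd}.

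First I would bound the push phase. Consider one \textsc{RevPush}$(u)$ inside the \alglp call, with current residual $r \triangleq \rvec(u,t)$; the algorithm invokes it only when $\abs r > \rmax$, and its cost is $d^{\din}_i(u)$, the number of in-edges of $u$ (which is at least $1$; see the remark on dangling nodes below). The step zeroes $\rvec(u,t)$, which lowers $M$ by exactly $\phivec_i(u)\abs r$, and adds $(1-\alpha)r/d^{\dout}_i(v)$ to $\rvec(v,t)$ for each in-neighbour $v$, which by the triangle inequality raises the $v$-term of $M$ by at most $\phivec_i(v)(1-\alpha)\abs r/d^{\dout}_i(v)$; hence the new value of $M$ is at most
\[
  M \;-\; \phivec_i(u)\abs r \;+\; (1-\alpha)\abs r \sum_{v \rightarrow u} \frac{\phivec_i(v)}{d^{\dout}_i(v)} \;=\; M \;-\; \abs r\,\alpha\, d^{\din}_i(u) \;\le\; M \;-\; \alpha\rmax\, d^{\din}_i(u) ,
\]
using the identity above and $\abs r > \rmax$. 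The point I expect to require the most care — and the heart of why this works in the dynamic setting — is that this bound is oblivious to the sign of $r$ and to the signs of the neighbours' residuals: \emph{every} push, positive or negative, strictly decreases $M$ by at least $\alpha\rmax$ times its own cost. Summing over all pushes of the call, the cost of \alglp is at most $\frac{1}{\alpha\rmax}\bigl(M(\rvec') - \norm{\phivec_i \cdot \rvec_i(t)}_1\bigr)$, where $\rvec'$ denotes the residual vector immediately after Step~\ref{step:upd}. (This also re-proves that the call terminates, since $M \ge 0$.)

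It remains to bound $M(\rvec')$, which I do in two moves matching the remaining two terms of the lemma. Passing from $G_{i-1}$ to $G_i$ replaces the weights $\phivec_{i-1}$ by $\phivec_i$ while the residual entries are still those of $\rvec_{i-1}(t)$, which satisfies $\norm{\rvec_{i-1}(t)}_\infty \le \rmax$ because it is the output of the previous round's \alglp call (or of Lemma~\ref{lem:initCost} when $i=1$); hence $\norm{\phivec_i \cdot \rvec_{i-1}(t)}_1 \le \norm{\phivec_{i-1} \cdot \rvec_{i-1}(t)}_1 + \rmax\,\norm{\phivec_i - \phivec_{i-1}}_1$. Then Step~\ref{step:upd} alters only the single entry $\rvec(u_i,t)$, by an amount of magnitude exactly $\Delta_i(u_i, v_i, t)$ (this is how \textsc{Insert}/\textsc{Delete} are defined, via the computation preceding Algorithm~\ref{alg:lpUpdate}), so it raises $M$ by at most $\phivec_i(u_i)\,\Delta_i(u_i, v_i, t)$. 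Combining, $M(\rvec') \le \norm{\phivec_{i-1} \cdot \rvec_{i-1}(t)}_1 + \rmax\,\norm{\phivec_i - \phivec_{i-1}}_1 + \phivec_i(u_i)\,\Delta_i(u_i, v_i, t)$; substituting into the push-phase bound and dividing out $\alpha\rmax$ yields exactly the three-term estimate. The only loose ends are the dangling-node configurations — a pushed node $u$ with no in-edges, or $t$ (resp.\ $u_i$) isolated on the in-side — which are precisely the cases excluded by the hypotheses of Theorem~\ref{thm:revPush} and deferred to Section~\ref{sec:discussions}.
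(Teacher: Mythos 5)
Your proposal is correct, reaches the same intermediate inequality $T \le \frac{1}{\alpha\rmax}\bigl(\norm{\phivec_i \cdot \rvec'(t)}_1 - \norm{\phivec_i \cdot \rvec_i(t)}_1\bigr)$ as the paper, and from there the derivation of the three-term bound is essentially identical. The interesting divergence is in how you establish that intermediate inequality. The paper splits the \alglp call into a positive-residual phase and a negative-residual phase, argues via monotonicity that the cost of each phase is bounded by the change in the weighted estimate $\sum_s d^{\din}_i(s)\,\bvec(s,t)/(\alpha\rmax)$, rewrites that in terms of residuals using the invariant \eqref{eq:residualForm}, and then needs a sign case analysis on $\rvec''(x,t)$ (resp.\ $\rvec_i(x,t)$) to pass from $\rvec' - \rvec''$ to $\abs{\rvec'} - \abs{\rvec''}$. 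You instead work directly with the potential $M(\rvec) = \norm{\phivec_i \cdot \rvec}_1$ and show — via the recursion $\phivec_i(u) - (1-\alpha)\sum_{v\rightarrow u}\phivec_i(v)/d^{\dout}_i(v) = \alpha\, d^{\din}_i(u)$, which follows from substituting \eqref{eq:ppr} entrywise into the definition of $\phivec_i$ — that \emph{each individual} \textsc{RevPush}$(u)$ with $\abs{\rvec(u,t)} > \rmax$ drops $M$ by at least $\alpha\rmax\, d^{\din}_i(u)$, regardless of the sign of the pushed residual or of any residual it touches. This is a cleaner and more local argument: it avoids the two-phase decomposition entirely, avoids the sign case analysis, dispenses with the invariant \eqref{eq:residualForm} in this step, and gives termination of \alglp as a byproduct. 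The trade-off is that you need the extra identity (one line of algebra), whereas the paper reuses the invariant it already has. Both proofs share the same implicit reliance on $d^{\din}_i(u) \ge 1$ for pushed nodes (otherwise the per-push potential drop does not pay for the constant overhead), which you correctly flag as the dangling-node issue the theorem hypotheses exclude.
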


\begin{proof}
	Let $\bvec'(t)$ and $\rvec'(t)$ denote the updated values of $\bvec_{i-1}(t)$ and $\rvec_{i-1}(t)$
	after step~\ref{step:upd} in Algorithm~\ref{alg:lpUpdate}.
	Note that when we invoke Algorithm~\ref{alg:lp} to reduce the maximum residual of $\rvec'(t)$,
	we first work on positive residuals and then work on negative residuals.
	We bound the cost of the two phases separately.

	We first bound the cost of pushing out positive residuals (step~\ref{step:pushPos1} and \ref{step:pushPos2} in Algorithm~\ref{alg:lp}).
	Let $\bvec''(t)$ and $\rvec''(t)$ denote the estimate and residual vector after step~\ref{step:pushPos2}. 
	Since only positive residuals are pushed out, $\bvec''(s, t) \ge \bvec'(s, t)$, for any $s \in V_i$.
	Hence the cost of this reverse local push is at most:
	\shrink{
	\begin{equation}\label{eq:updCost}
		 T^+ \triangleq \sum_{s \in V_i} \frac {d^{\din}_{i}(s) \times (\bvec''(s, t) - \bvec'(s, t))} {\alpha\rmax}
	\end{equation}}

	By Equation \eqref{eq:residualForm},
	\shrink{\[ \bvec''(s, t) = \pi_{i}(s, t) - \sum_{x \in V_i} \pi_{i}(s, x) \times \rvec''(x, t)\]}
	Similarly,
	\shrink{\[ \bvec'(s, t) = \pi_{i}(s, t) - \sum_{x \in V_i} \pi_{i}(s, x) \times \rvec'(x, t)\]}
	Hence the difference is:
	\[ \sum_{x \in V_i} \pi_{i}(s, x) \times (\rvec'(x, t) - \rvec''(x, t)) \]
	Apply the above expression into Equation~\eqref{eq:updCost}, we get:
	\shrink{\begin{align*}
		T^+ =& \sum_{s \in V_i} \sum_{x \in V_i} \frac {d^{\din}_{i}(s) \times \pi_{i}(s, x) \times
		(\rvec'(x, t) - \rvec''(x, t))} {\alpha\rmax} \\
		=& \sum_{x \in V_i} \frac {\phivec_{i}(x) \times (\rvec'(x, t) - \rvec''(x, t))} {\alpha\rmax}
	\end{align*}}%
	Here we used $\phivec_i(x)$ to simplify the above expression.
	Now we show that the above expression is at most:
	\shrink{\begin{equation*}
		T^+ \le \frac {\norm {\phivec_{i} \times \rvec'(t)}_1 - \norm {\phivec_{i} \cdot \rvec''(t)}_1} {\alpha\rmax}
	\end{equation*}}%
	To check this, we compare each vertex $x \in V_i$ and their corresponding entries in the above expression.
	Since $\phivec_{i}(x)$ is positive for every $x \in V_i$, if $\rvec''(x, t) \ge 0$, then clearly
	\shrink{\begin{align*}
			&\phivec_{i}(x) \times (\rvec'(x, t) - \rvec''(x, t)) \\
		\le& \phivec_{i}(x) \times (\abs {\rvec'(x, t)} - \abs {\rvec''(x, t)})
	\end{align*}}%
	If $\rvec''(x, t) < 0$, then we infer that $x$ has not performed any push operation:
	otherwise $\rvec''(x, t)$ should be nonnegative.
	Note that $x$ only received positive residual updates during this phase of
	forward push. This shows
	$\rvec'(x, t) \le \rvec''(x, t) < 0$. Therefore:
	\shrink{\begin{align*}
		&\phivec_{i}(x) \times (\rvec'(x, t) - \rvec''(x, t)) \\
		\le & 0 \le \phivec_{i}(x) \times (\abs {\rvec'(x, t)} - \abs {\rvec''(x, t)})
	\end{align*}}

	We then bound the cost of pushing out negative residuals (step~\ref{step:pushNeg1} and \ref{step:pushNeg2} in Algorithm~\ref{alg:lp}).
	Since only negative residuals are pushed out, $\bvec_i(s, t) \le \bvec''(s, t)$, for any $s \in V_i$.
	By the same argument we used to derive Equation \eqref{eq:updCost}, the total cost of this reverse local push is at most:
		\shrink{
		\begin{align}\label{eq:updCostNeg}
			T^- \triangleq \sum_{x \in V_i} \frac {\phivec_i(x) \times (\rvec_i(x, t) - \rvec''(x, t))} {\alpha \rmax}
		\end{align}}
	We claim that:
		\shrink{
		\begin{align}\label{eq:boundCostNeg}
			T^- \le \frac {\norm {\phivec_i \cdot \rvec''(t)}_1 - \norm {\phivec_i \cdot \rvec_i(t)}_1} {\alpha \rmax}
		\end{align}}%
	To see this, we compare each vertex $x \in V_i$	and their corresponding entries
	between Equation \eqref{eq:updCostNeg} and \eqref{eq:boundCostNeg}.
	If $\rvec_i(x, t) \le 0$, then clearly:
		\shrink{
		\begin{align*}
					&\phivec_i(x) \times (\rvec_i(x, t) - \rvec''(x, t)) \\
			\le &\phivec_i(x) \times (\abs {\rvec''(x, t)} - \abs {\rvec_i(x, t)})
		\end{align*}}%
	If $\rvec_i(x, t) > 0$, then we infer that $x$ does not push out any negative residuals,
	otherwise $\rvec_i(x, t)$ will be non-positive.
	This further implies that $\rvec''(x, t) \ge \rvec_i(x, t) > 0$,
	since $x$ may only receive negative residual updates during this phase.
	Therefore,
		\shrink{
		\begin{align*}
			& \phivec_i(x) \times (\rvec_i(x, t) - \rvec''(x, t)) \\
			\le 0 & \le \phivec_i(x) \times (\abs {\rvec''(x, t)} - \abs {\rvec_i(x, t)}))
		\end{align*}}
	
	Finally, we obtain a bound on the total running time of invoking Algorithm~\ref{alg:lp}
	to reduce the maximum residual:
		\begin{align}\label{eq:abs}
			T \triangleq T^+ + T^- \le \norm {\phivec_i \cdot \rvec'(t)}_1 - \norm {\phivec_i \cdot \rvec_i(t)}_1
		\end{align}
	We work on the above equation to finish the proof.
	First, since $\rvec'(t)$ is $\rvec_{i-1}(t)$ with an update of $\Delta_i(u_i, v_i, t)$ on vertex $u_i$,
	we can separate out $\Delta_i(u_i, v_i, t)$ from $\rvec'(t)$:
	\shrink{\begin{equation*}
		\norm {\phivec_{i} \cdot \rvec'(t)}_1 \le \phivec_{i}(u_i) \times \Delta_i(u_i, v_i, t)
		+ \norm {\phivec_{i} \cdot \rvec_{i-1}(t)}_1
	\end{equation*}}
	Applying the above equation into Equation~\eqref{eq:abs}, we found that
	\shrink{\begin{equation}\label{eq:sep1}
		T \le \frac {\norm {\phivec_i \cdot \rvec_{i-1}(t)}_1 + \phivec_i(u_i) \times \Delta_i(u_i, v_i, t) - \norm{\phivec_i \cdot \rvec_i(t)}_1 }
						{\alpha\rmax}
	\end{equation}}
	Then, since $\norm {\rvec_{i-1}(t)}_{\infty}$ is at most $\rmax$,
	\shrink{\begin{align*}\label{eq:sep2}
		\norm {\phivec_{i} \cdot \rvec_{i-1}(t)}_1 
		\le& \norm {(\phivec_i - \phivec_{i-1}) \cdot \rvec_{i-1}(t)}_1
			+	\norm {\phivec_{i-1} \cdot \rvec_{i-1}(t)}_1 \\
		\le& \rmax \times \norm {\phivec_{i} - \phivec_{i-1}}_1
		+ \norm {\phivec_{i-1} \cdot \rvec_{i-1}(t)}_1
	\end{align*}}%
	Applying the above equation into equation \eqref{eq:sep1}
	gives us the desired conclusion.
\end{proof}

Lemma~\ref{lem:amortize} naturally suggests that
there is a way to amortize per edge update costs,
by cancelling out the weighted residual terms.
Hence, by summing up the initialization cost in Lemma~\ref{lem:initCost},
and the edge update cost in Lemma~\ref{lem:amortize}, for $i = 1,\dots,\K$,
we obtained the total cost of maintaining the estimates and residuals
for every graph $G_i$, from $i=0,\dots,\K$:
\shrink{\begin{align*}
	\psi(t) \triangleq &\frac {\phivec_0(t)} {\alpha\rmax}
			+ \sum_{i=1}^{\K} \frac {\phivec_{i}(u_i) \times \Delta_i(u_i, v_i, t)} {\alpha\rmax}
			+ \sum_{i=1}^{\K} \frac {\norm {\phivec_{i} - \phivec_{i-1}}_1} {\alpha}
\end{align*}}%

Now we are ready to present an average case analysis, by summing up $\psi(t)$ over all the target vertices.
First, we know from the work of Lofgren et al. (Theorem $1$, \cite{kdd:fastppr}) that:
\begin{equation*}
	\sum_{t \in V_0} {\phivec_0(t)} = m.
\end{equation*}
Therefore, the total running time of maintaining a pair of estimates and residuals with threshold $\rmax$
for every node $t \in V_0$,
starting from $G_0$ with $m$ edges,
during $\K$ edge updates,
is at most:
\shrink{\begin{align}\label{eq:totalCost}
	\Psi \triangleq \sum_{t \in V_0} \psi(t) =
			\frac {m} {\alpha \rmax}
		+	\sum_{i=1}^{\K} \sum_{t \in V_0} \frac {\phivec_i(u_i) \times \Delta_i(u_i, v_i, t)} {\alpha \rmax}\\
		+ \sum_{i=1}^{\K} \left(\sum_{t \in V_0} \frac {\norm {\phivec_{i} - \phivec_{i-1}}_1} {\alpha} \right)
\end{align}}%
where we have changed the order of summation for the second and third term.

\begin{lemma}\label{lem:sumDelta}
	Let $t$ be any vertex of $V_i$.
	Then \[\sum_{t \in V_i} \Delta_i(u_i, v_i, t) \le (2n\rmax + 2) / (\alpha \times d^{\dout}_i(u_i)),\]
	for any $i = 1,\dots, \K$.
\end{lemma}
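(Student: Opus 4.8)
The plan is to just unfold the definition of the single quantity $\Delta_i(u_i,v_i,t)$ created by an edge update and bound a resulting $l_1$ sum. Since the expression inside the absolute value in Step~\ref{step:ins} (insertion) and Step~\ref{step:del} (deletion) is identical, it suffices to handle one case; I will take insertion. Unfolding the definition of $\Delta_i$ (the factor $d^{\dout}_i(u_i)$ does not depend on $t$, so it pulls out of the sum),
\[
\sum_{t \in V_i}\Delta_i(u_i,v_i,t)=\frac{1}{\alpha\,d^{\dout}_i(u_i)}\sum_{t\in V_i}\bigl|(1-\alpha)\bvec_{i-1}(v_i,t)-\bvec_{i-1}(u_i,t)-\alpha\,\rvec_{i-1}(u_i,t)+\alpha\,\mathbf 1_{u_i=t}\bigr|,
\]
so the whole task reduces to showing the inner sum of absolute values is at most $2n\rmax+2$. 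Applying the triangle inequality termwise, this inner sum is at most
\[
(1-\alpha)\sum_{t}\abs{\bvec_{i-1}(v_i,t)}+\sum_{t}\abs{\bvec_{i-1}(u_i,t)}+\alpha\sum_{t}\abs{\rvec_{i-1}(u_i,t)}+\alpha\sum_{t}\mathbf 1_{u_i=t}.
\]
The last sum is $1$, and since the invariant we maintain guarantees $\norm{\rvec_{i-1}(t)}_\infty\le\rmax$ for every target $t$, the third sum is at most $n\rmax$.

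The heart of the argument is the bound $\sum_{t}\abs{\bvec_{i-1}(s,t)}\le 1+n\rmax$ for an arbitrary fixed node $s$, which I will apply with $s=v_i$ and with $s=u_i$. To prove it, rearrange the reverse-push invariant \eqref{eq:residualForm} on $G_{i-1}$ into $\bvec_{i-1}(s,t)=\pi_{i-1}(s,t)-\sum_{x}\pi_{i-1}(s,x)\,\rvec_{i-1}(x,t)$, take absolute values, and sum over $t$:
\[
\sum_{t}\abs{\bvec_{i-1}(s,t)}\le\sum_{t}\pi_{i-1}(s,t)+\sum_{x}\pi_{i-1}(s,x)\sum_{t}\abs{\rvec_{i-1}(x,t)}\le 1+\Bigl(\sum_{x}\pi_{i-1}(s,x)\Bigr)n\rmax=1+n\rmax,
\]
using that $\pi_{i-1}(s,\cdot)$ is a probability distribution (so both $\sum_t$ and $\sum_x$ of it equal $1$) and $\norm{\rvec_{i-1}(t)}_\infty\le\rmax$ once more. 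Plugging the four pieces in, the inner sum is at most $(1-\alpha)(1+n\rmax)+(1+n\rmax)+\alpha n\rmax+\alpha=2+2n\rmax$, and dividing by $\alpha\,d^{\dout}_i(u_i)$ yields the claim; the deletion case is word for word the same.

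I expect the only delicate point to be this last bound on $\sum_{t}\abs{\bvec_{i-1}(s,t)}$. One must keep the absolute value because edge updates introduce negative residuals, so $\bvec_{i-1}(s,t)$ is not guaranteed nonnegative (unlike in the static local-push algorithm); this is why rewriting $\bvec_{i-1}(s,t)$ through the invariant \eqref{eq:residualForm}, rather than arguing directly, is the right move. There is also a minor bookkeeping issue that $V_i$ can differ from $V_0$ by a couple of vertices, so ``$n$'' in $n\rmax$ should really be read as $\abs{V_i}$; under the assumptions of Theorem~\ref{thm:revPush} the vertex set is essentially fixed, so this is harmless. Note that this step is precisely where the factor $n\rmax$ — and hence the $\K/(n\rmax\alpha^2)$ term in Theorem~\ref{thm:revPush} — enters.
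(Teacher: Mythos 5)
Your proof is correct and takes essentially the same route as the paper's: factor out the $t$-independent denominator, apply the triangle inequality termwise to the numerator, and bound the four resulting sums over $t$ by $1+n\rmax$, $1+n\rmax$, $n\rmax$, and $1$ respectively, using $\norm{\rvec_{i-1}(t)}_\infty\le\rmax$. The only cosmetic difference is that you re-derive the accuracy bound $\abs{\bvec_{i-1}(s,t)}\le\pi_{i-1}(s,t)+\rmax$ from the invariant \eqref{eq:residualForm} and are explicit that absolute values are needed (since residuals and estimates can be negative here), whereas the paper invokes that bound directly and writes the termwise estimates without absolute-value bars.
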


\begin{proof}
We deal with the sum of numerators of $\Delta_i(u_i, v_i, t)$ first,
since the denominator of $\Delta_i(u_i, v_i, t)$ does not depend on $t$.
We first take out each term from the absolute value to obtain an upper bound of $\Delta_i(u_i, v_i, t)$:
\shrink{\begin{align*}
	   & (1-\alpha) \times \bvec_{i-1}(v_i, t) + \bvec_{i-1}(u_i, t) \\
		 &	+ \alpha \times \rvec_{i-1}(u_i, t)+ \alpha \times \mathbf 1_{u_i = t} \\
	\le&~(1-\alpha) \times (\pi_{i-1}(v_i, t) + \rmax) + (\pi_{i-1}(u_i, t) + \rmax)  \\
		&		+\alpha \times \rmax + \alpha \times \mathbf 1_{u_i = t}
\end{align*}}%
We used the fact that $\bvec_{i-1}(u_i, t) \le \pi_{i-1}(u_i, t) + \rmax$ (similarly for $v_i$)
and $\rvec_{i-1}(u_i, t) \le \rmax$.
If we sum up the above expression over $t \in V_i$,
and take the denominator back to the expression,
it leads to the following much simplified conclusion:
	\begin{align*}
		\sum_{t \in V_i} \Delta_i(u_i, v_i, t) \le \frac {2n\rmax + 2} {\alpha \times d^{\dout}_i(u_i)}
	\end{align*}
\end{proof}

\iftoggle{fullpaper}{
	At the end, we prove Theorem~\ref{thm:revPush} by solving $\Psi$ in each dynamic graph model.
}

\paragraph{Arbitrary edge update on an undirected graph}

\begin{proofof}{Theorem~\ref{thm:revPush}, Part $1$}
Note that Lemma~\ref{lem:initCost} (initialization cost) holds for undirected graphs as well.
When we update an edge, We need to take into account that we applied
insertion/deletion twice (step \ref{step:ins} and \ref{step:del} in Algorithm~\ref{alg:lpUpdate}),
for each direction of the edge.
This makes a difference when we separate out the difference between
$\rvec'(t)$ and $\rvec_{i-1}(t)$.
Hence, it suffices to add
	\[ \frac{\phivec_i(v_i) \times \Delta_i(v_i, u_i, t)} {\alpha \rmax}\]
into the bound of Lemma~\ref{lem:amortize}:
the rest of the proof holds for undirected graphs.
In conclusion, we found that the total running time of maintaining $\bvec_i(t)$ and $\rvec_i(t)$
is at most:
	\shrink{\begin{align*}
		 \Psi \triangleq \frac m {\alpha\rmax}
					&+ \sum_{i=1}^{\K} \sum_{t \in V_0}
						\frac {\phivec_i(u_i) \times \Delta_i(u_i, v_i, t) + \phivec_i(v_i) \times \Delta_i(v_i, u_i, t)} {\alpha\rmax} \\
					&+ \sum_{i=1}^{\K} \sum_{t \in V_0}
						\frac {\norm {\phivec_i - \phivec_{i-1}}_1} {\alpha}
	\end{align*}}%

Now we claim that with $\phivec_i(x) = d_i(x)$, for any $x \in V_i$ and $i = 0,\dots,\K$.
By Proposition~\ref{prop:reversible},
\shrink{\begin{align*}\label{eq:undPhi}
	\phivec_{i}(x) = \sum_{s \in V_i} d_i(s) \times \pi_i(s, x)
			=	\sum_{s \in V_i} d_i(x) \times \pi_{i}(x, s).
				= d_{i}(x).
\end{align*}}%
Combined with Lemma~\ref{lem:sumDelta}, the second term of $\Phi$ is at most
$\K \times (\frac {4n} {\alpha^2} + \frac 4 {\alpha^2 \rmax})$.

It also follows that $\phivec_i$ is equal to the degree vector of $G_i$,
and the $l_1$ difference between $\phivec_{i}$ and  $\phivec_{i-1}$
is equal to 2, since only the degrees of $u_i$ and $v_i$ changed by one.

To sum up,
\begin{equation*}
	\Psi \le \frac m {\alpha\rmax} + \K \times (\frac {4n} {\alpha^2} + \frac 4 {\alpha^2\rmax})
				+ \frac {2n\K} {\alpha}.
\end{equation*}
\end{proofof}

\iftoggle{fullpaper}{
	\paragraph{Random edge permutation of a directed graph}
		
\begin{lemma}\label{lem:rpDelta}
	Let $e_i = u_i \rightarrow v_i$ denote the $i$-th arrived edge in the random edge permutation, then
	$\E { \frac {\phivec_{i}(u_i)} {d^{\dout}_{i}(u_i)} } = 1$.
\end{lemma}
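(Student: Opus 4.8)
The plan is to condition on the graph $G_i$ and exploit the exchangeability of the uniform random edge permutation. Fix the realized (\emph{unordered}) edge set $E_i$ of $G_i$, i.e.\ the set occupying the first $m+i$ positions of the permutation. By symmetry of the uniform random permutation of $E_{\K}$, conditioned on this set the internal ordering of its $m+i$ edges is still uniform; in particular the last one, $e_i = u_i \rightarrow v_i$, is a uniformly random element of $E_i$. Hence, conditioned on $G_i$,
\[
	\E{\frac{\phivec_{i}(u_i)}{d^{\dout}_{i}(u_i)} \mid G_i}
		= \frac{1}{\abs{E_i}} \sum_{(u \rightarrow v) \in E_i} \frac{\phivec_{i}(u)}{d^{\dout}_{i}(u)} .
\]

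Next I would evaluate this inner sum for a fixed $G_i$ by grouping the edges of $E_i$ according to their tail vertex. Each vertex $u$ with $d^{\dout}_{i}(u) \ge 1$ is the tail of exactly $d^{\dout}_{i}(u)$ edges of $E_i$, and vertices of out-degree zero never occur as a tail (so there is no division by zero); therefore
\[
	\sum_{(u \rightarrow v) \in E_i} \frac{\phivec_{i}(u)}{d^{\dout}_{i}(u)}
		= \sum_{u \in V_i} d^{\dout}_{i}(u) \cdot \frac{\phivec_{i}(u)}{d^{\dout}_{i}(u)}
		= \sum_{u \in V_i} \phivec_{i}(u) .
\]
Then I would invoke the normalization $\sum_{x \in V_i} \phivec_{i}(x) = \abs{E_i}$: unfolding the definition of $\phivec_i$ and swapping the order of summation gives $\sum_{x} \phivec_{i}(x) = \sum_{s} d^{\din}_{i}(s) \sum_{x} \pi_i(s,x) = \sum_{s} d^{\din}_{i}(s) = \abs{E_i}$, since each $\pi_i(s,\cdot)$ is a probability distribution and the in-degrees sum to the number of edges. (This is exactly the identity $\sum_{t} \phivec_0(t) = m$ of Lofgren et al.~\cite{kdd:fastppr}, applied to $G_i$ in place of $G_0$.) Substituting, the conditional expectation above equals $\abs{E_i}/\abs{E_i} = 1$ for every realization of $G_i$, and taking the outer expectation over $G_i$ yields $\E{\phivec_{i}(u_i)/d^{\dout}_{i}(u_i)} = 1$.

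The only genuinely delicate point is the first step — that $e_i$ is uniform over $E_i$ conditioned on $G_i$. This is the standard principle of deferred decisions for a uniform random permutation: conditioning on which $m+i$ edges land in the first $m+i$ positions leaves their internal order uniform, hence the occupant of position $m+i$ is uniform over that set. Everything afterward is a one-line counting identity together with the known normalization of $\phivec_i$, so I do not anticipate further obstacles; in particular dangling or zero-out-degree vertices cause no difficulty because they never appear as edge tails.
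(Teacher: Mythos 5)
Your proof is correct and takes essentially the same route as the paper: both condition on the realized edge set $E_i$, use the fact that the $i$-th edge is then uniform over $E_i$ (so $\Pr[u_i = x] = d^{\dout}_i(x)/(m+i)$), and apply the normalization $\sum_x \phivec_i(x) = m+i$ from Lofgren et al. The only difference is presentational — you derive the uniformity of $e_i$ from exchangeability of the random permutation and reprove the $\phivec_i$ normalization, whereas the paper cites Bahmani et al.\ and Lofgren et al.\ for these two facts.
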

\begin{proof}
	Without loss of generality, we can assume that $E_i$ is the first $m+i$ edges in the random edge permutation,
	and then take expectation over a random permutation of $E_i$.
	Once we have proved the Lemma for this case, then by the linearity of expectations,
	we can prove it over the random permutation of all the edges.
	From the work of Bahmani et al.~\cite{vldb:walkUpd},
	we know that the probability that $u_i$ is the start vertex of $e_i$ is equal to
	$d^{\dout}_{i}(u_i) / (m + i)$.
	Therefore,
	\begin{equation}\label{eq:deltaDirct}
		\E { \frac {\phivec_{i}(u_i)} { d^{\dout}_{i}(u_i) } }
		= \sum_{x \in V_i} \frac {\phivec_{i}(x)} {m + i} = 1
	\end{equation}
	Note in particular that $\phivec_i(\cdot)$ is fixed once we have fixed $E_i$,
	and the last equation follows from Theorem 1 of Lofgren et al.~\cite{kdd:fastppr}.
\end{proof}

In the following Lemma, we show that the expected change from $\phivec_i$ to $\phivec_{i-1}$
is ``small''.

\begin{lemma}\label{lem:rpPur}
	Let $e_i = u_i \rightarrow v_i$ denote the $i$-th arrived edge, then
	$\E {\sum_{t \in V_i} \norm {\phivec_{i} - \phivec_{i-1}}_1} \le 1 + 4 /\alpha$.
\end{lemma}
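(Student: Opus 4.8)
The plan is to express $\norm{\phivec_i - \phivec_{i-1}}_1$ entry-by-entry using the definition $\phivec_j(x) = \sum_{s} d^{\din}_j(s)\,\pi_j(s,x)$, and to split the change into two sources: the change in the indegree weights $d^{\din}$ (only $v_i$ changes, by one), and the change in the PPR values $\pi_j(s,x)$ themselves (caused by $u_i$'s outdegree changing by one). First I would write $\phivec_i - \phivec_{i-1} = (D^{\din}_i - D^{\din}_{i-1})\,\pivec_i^{\,\cdot\,} \;+\; D^{\din}_{i-1}\,(\pivec_i^{\,\cdot\,} - \pivec_{i-1}^{\,\cdot\,})$ in an appropriate matrix form, so the first term contributes exactly $\norm{\pi_i(v_i,\cdot)}_1 = 1$ to the $l_1$ norm (one unit of indegree weight placed at $v_i$, and $\pi_i(v_i,\cdot)$ is a probability distribution). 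That accounts for the ``$1$'' in the bound.

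For the second term, the key fact is that when edge $u_i \rightarrow v_i$ is inserted, the only row of the transition matrix that changes is row $u_i$, so by a standard perturbation argument (of the same flavor as Proposition~\ref{prop:walkUpd}) the total $l_1$ change $\sum_x \sum_s d^{\din}_{i-1}(s)\,|\pi_i(s,x) - \pi_{i-1}(s,x)|$ is controlled. Concretely I would bound $\sum_x |\pi_i(s,x) - \pi_{i-1}(s,x)|$ by something proportional to $\pi(s,u_i)/(\alpha\, d^{\dout}(u_i))$ — the probability a walk from $s$ ever reaches $u_i$ and must be rerouted, amortized over the walk's length — and then sum against the weights $d^{\din}_{i-1}(s)$. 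Summing $\sum_s d^{\din}_{i-1}(s)\,\pi(s,u_i)$ recovers a $\phivec$-type quantity at $u_i$, i.e. roughly $\phivec_{i-1}(u_i)$ up to the $O(1)$ discrepancy between $G_{i-1}$ and $G_i$, divided by $\alpha\, d^{\dout}(u_i)$. At that point I invoke the random-permutation averaging exactly as in Lemma~\ref{lem:rpDelta}: $\E{\phivec_i(u_i)/d^{\dout}_i(u_i)} = 1$, which turns the $\phivec_{i-1}(u_i)/d^{\dout}(u_i)$ factor into an $O(1)$ expected value, yielding the $4/\alpha$ term (the constant absorbing the teleport factor $1/\alpha$, the two-sided perturbation, and the $G_{i-1}$-versus-$G_i$ slack).

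The main obstacle is making the perturbation bound on $\sum_x |\pi_i(s,x) - \pi_{i-1}(s,x)|$ rigorous with the right dependence — in particular getting the $1/(\alpha\, d^{\dout}(u_i))$ scaling rather than a weaker $1/\alpha$, since a cruder bound would not sum to a constant after averaging. I expect to handle this by the random-walk coupling: couple walks from $s$ on $G_{i-1}$ and $G_i$ so they agree until the first visit to $u_i$, bound the probability of such a visit by $\pi_{i-1}(s,u_i)/\alpha$ (geometric-series over return visits, as in Bahmani et al.), and observe that conditioned on rerouting, the fresh walk segment from $u_i$ uses the new edge with probability $1/(d^{\dout}_i(u_i))$, contributing the extra $1/d^{\dout}_i(u_i)$ factor. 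A secondary subtlety is the interchange of the expectation with the sums over $s$ and $x$ and the identification of the resulting weighted sum as $\phivec_{i-1}(u_i)$ (as opposed to $\phivec_i(u_i)$); this is where the additive $O(1)$ slack — one edge's worth of difference in indegrees and outdegrees — is absorbed into the constant, and it is the same slack already present in Lemma~\ref{lem:rpDelta}.
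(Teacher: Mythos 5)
Your proposal follows essentially the same path as the paper's proof. You decompose $\phivec_i - \phivec_{i-1}$ into an indegree-change term (contributing exactly the $1$ via $\norm{\pi_i(v_i,\cdot)}_1 = 1$) plus a PPR-perturbation term; you bound $\sum_x|\pi_i(s,x)-\pi_{i-1}(s,x)|$ by $O(1) \cdot \pi_{i-1}(s,u_i)/(\alpha\,d^{\dout}(u_i))$ via the rerouting/coupling argument of Bahmani et al.\ (Proposition~\ref{prop:walkUpd}), reassemble the weighted sum $\sum_s d^{\din}_{i-1}(s)\,\pi_{i-1}(s,u_i)$ into $\phivec_{i-1}(u_i)$, and finish with the same random-permutation averaging identity as Lemma~\ref{lem:rpDelta}; you also correctly flag the $G_{i-1}$-versus-$G_i$ slack (where the paper pays a factor of $2$ to swap $d^{\dout}_i$ for $d^{\dout}_{i-1}$). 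This is the paper's argument, sketched at a somewhat higher level but with all the same ingredients and in the same order.
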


\begin{proof}
	We first put in the definition of $\phivec_i$ and $\phivec_{i-1}$,
	\shrink{\begin{align*}
		T \triangleq &\sum_{t \in V_i} \norm {\phivec_{i} - \phivec_{i-1}} \nonumber \\
		=& \sum_{t \in V_i} \sum_{s \in V_i} \abs { d^{\din}_{i}(s) \times \pi_{i}(s, t) - d^{\din}_{i-1}(s) \times \pi_{i-1}(s, t) }
	\end{align*}}%
	Since only $v_i$'s indegree increased by 1, and the indegrees of other vertices do not change,
	we can take out this additional term of $v_i$, and get an upper bound:
	\shrink{\begin{align}
		 T \le &\sum_{t \in V_i} \pi_{i}(v_i, t) + \sum_{s \in V_i} \sum_{t \in V_i} \abs {d^{\din}_{i-1}(s) \times (\pi_{i}(s,t) - \pi_{i-1}(s, t))} \nonumber\\
		=& 1 + \sum_{s \in V_i} d^{\din}_{i-1}(s) \times \left( \sum_{t \in V_i} \abs {\pi_{i}(s, t) - \pi_{i-1}(s, t)} \right) \label{eq:rpC}
	\end{align}}%
	We argue that in the above expression,
		\[ \sum_{t \in V_i} \abs{\pi_{i}(s, t) - \pi_{i-1}(s, t)} \le \frac {2 \times \pi_{i-1}(s, u_i)} {\alpha \times d^{\dout}_{i}(u_i)} \]
	To see this, think of the random walk definition of personalized PageRank.
	The left hand side is at most twice times the probability that a random walk
	from $s$ needs to be updated after inserting $e_i$:
	this follows from the work of Bahmani et al.~\cite{vldb:walkUpd}.
	And we have used Proposition~\ref{prop:walkUpd} to get this probability.
	Therefore equation~\eqref{eq:rpC} becomes:
	 \shrink{\begin{align*}
	 		1 + \sum_{s \in V_i} \frac {2 \times d^{\din}_{i-1}(s) \times \pi_{i-1}(s, u_i) } {\alpha \times d^{\dout}_{i}(u_i)}
	 &= 1 + \frac {2\times \phivec_{i-1}(u_i)} {\alpha \times d^{\dout}_{i}(u_i)} \\
	 &\le 1 + \frac {4 \times \phivec_{i-1}(u_i)} {\alpha \times d^{\dout}_{i-1}(u_i)}
	 \end{align*}}%
	Here we used that assumption that $d_{i-1}(u_i)$ is not zero:
	since we assumed that there are no dangling nodes in $G_0$,
	there will not be dangling nodes after edge insertions.
	By a similar argument to that of Lemma~\ref{lem:rpDelta}:
	\begin{equation*}
		\E { \frac {\phivec_{i-1}(u_i)} {d^{\dout}_{i-1}(u_i)} } = 1.
	\end{equation*}
\end{proof}

\begin{proofof}{Theorem~\ref{thm:revPush}, Part $2$}
We first apply Lemma~\ref{lem:sumDelta}
to the total running time $\Psi$ we derived in Equation~\eqref{eq:totalCost} and obtain:
\begin{align*}
	\E {\Psi} \le \frac {m} {\alpha \rmax}
					+ \sum_{i=1}^{\K} \E { \frac {\phivec_i(u_i) \times (2n\rmax + 2)} {\alpha^2 \rmax \times d^{\dout}_{i}(u_i)} } \\
					+ \sum_{i=1}^{\K} \E {\sum_{t \in V_0} \norm {\phivec_i - \phivec_{i-1}}_1} / \alpha
\end{align*}

By Lemma~\ref{lem:rpDelta} and Lemma~\ref{lem:rpPur} (note that $V_0 \subseteq V_i$),
the above expression is at most:
\begin{equation}\label{eq:dirRev}
	\E{\Psi} \le \frac m {\alpha\rmax} +
			\K \times (\frac {2n} {\alpha^2} + \frac 2 {\alpha^2\rmax}) +
			\K \times (\frac 1 {\alpha} + \frac 4 {\alpha^2}).
\end{equation}
\end{proofof}

}

\subsection{Forward push}

Now we apply a similar approach to derive a dynamic forward push algorithm.
Let $s$ be a source vertex of $G$.
Let $\rmax$ be a threshold parameter between $0$ and $1$.
Our goal is to maintain a pair of estimates $\fvec(s)$ and residuals $\rvec(s)$
such that $\gvec(s, t) / d^{\dout}(t) \le \rmax$, for any $t \in V$.
We begin by describing an equivalent invariant property to Equation~\ref{eq:fwdReisdualForm}.
\begin{lemma}\label{lem:invarianceFwd}
	Equation~\ref{eq:fwdReisdualForm} implies
		\[ \fvec(s, t) + \alpha \times \gvec(s, t) = \sum_{x \in N^{\din}(t)} \frac {\fvec(s, x)} {d^{\dout}(x)} + \alpha \times \mathbf 1_{t = s}, \forall t \in V,\]
	and vice versa.
\end{lemma}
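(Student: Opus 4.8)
The plan is to follow the proof of Lemma~\ref{lem:invariance} almost verbatim, with the forward personalized PageRank vector in place of the reverse one. First I would fix the source $s$, write $\pivec_s = \pi(s,\cdot)$ for the column vector whose $t$-th entry is $\ppr s t$, and set $M \triangleq (I - (1-\alpha)A^\intercal D^{-1})/\alpha$, which is precisely the transpose of the matrix $\Pi$ used in Lemma~\ref{lem:invariance}. By Haveliwala~\cite{03:Hav} (as already invoked there), $\Pi$, and hence $M = \Pi^\intercal$, is invertible; Equation~\eqref{eq:ppr} reads $\pivec_s = M^{-1}\vec e_s$; and the $(t,x)$-th entry of $M^{-1}$ equals $\ppr x t$, since the $t$-th entry of $\pivec_x = M^{-1}\vec e_x$ is $\ppr x t$.

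Next I would rewrite the forward invariant \eqref{eq:fwdReisdualForm} in matrix form. Since $\sum_{x\in V}\gvec(s,x)\,\ppr x t = \sum_{x\in V}(M^{-1})_{t,x}\,\gvec(s,x) = (M^{-1}\gvec(s))_t$, Equation~\eqref{eq:fwdReisdualForm} is exactly $\pivec_s = \fvec(s) + M^{-1}\gvec(s)$. Left-multiplying by $M$ and using $M\pivec_s = \vec e_s$ turns this into $\vec e_s = M\fvec(s) + \gvec(s)$; expanding $M$ and clearing the $1/\alpha$ gives
\begin{equation*}
	\fvec(s) + \alpha\,\gvec(s) = (1-\alpha)\,A^\intercal D^{-1}\fvec(s) + \alpha\,\vec e_s .
\end{equation*}
Reading off coordinate $t$, and using $(A^\intercal D^{-1}\fvec(s))_t = \sum_{x\,:\,x\to t}\fvec(s,x)/d^{\dout}(x) = \sum_{x\in\N^{\din}(t)}\fvec(s,x)/d^{\dout}(x)$ together with $(\vec e_s)_t = \mathbf 1_{t=s}$, yields the asserted invariant (with the same factor $1-\alpha$ that appears in Lemma~\ref{lem:invariance}). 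For the converse, every step above is an equivalence — in particular left-multiplication by the invertible matrix $M$ — so the displayed invariant implies \eqref{eq:fwdReisdualForm} in the same way.

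There is no deep obstacle: the whole content is reversible linear algebra, and the only care needed is bookkeeping. Because forward push fixes $s$ and varies over the targets $t$, one must consistently use $M = \Pi^\intercal$ rather than $\Pi$, keep the out-degree normalization $d^{\dout}(x)$ attached to the pushing node $x$ (not to $t$), and use the in-neighborhood $\N^{\din}(t)$ rather than an out-neighborhood. One might hope to avoid even this by quoting Lemma~\ref{lem:invariance} on the edge-reversed graph, but the personalized PageRank walk normalizes transitions by out-degree, so the reversed instance does not inherit the correct normalization; hence the short computation above is the cleanest route, and it is the step I would write out in full.
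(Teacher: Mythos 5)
Your proof is correct and follows essentially the same route as the paper: rewrite the invariant in vector form using the resolvent matrix, left-multiply by its inverse, and read off coordinate $t$. You also correctly observe that the derivation produces a $(1-\alpha)$ factor in front of the sum $\sum_{x\in\N^{\din}(t)}\fvec(s,x)/d^{\dout}(x)$, matching Lemma~\ref{lem:invariance} and the final display of the paper's own proof; the lemma statement as printed omits this factor, which appears to be a typo.
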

\begin{proof}
	Let $\pivec_s = \ppr s {\cdot}$ denote a vector with personalized PageRank from $s$ to every node of $G$.
	Let $\Pi = \alpha \times (I - (1-\alpha)A^\intercal D^{-1})^{-1}$:
	the fact that $\Pi$ exists follows from Definition~\ref{eq:ppr},
	and $\pivec_s$ is equal to $\Pi \cdot \vec e_s$.
	Therefore equation~\ref{eq:fwdReisdualForm} in vector form is equivalent to:
	\shrink{\begin{align*}
										&~\pivec_s = \fvec(s) + \Pi \cdot \gvec(s) \\
		\Leftrightarrow &~\Pi^{-1} \cdot \pivec_s = \Pi^{-1} \fvec(s) + \gvec(s) \\
		\Leftrightarrow &~\vec e_s = \frac {I - (1-\alpha)A^\intercal D^{-1}} {\alpha} \cdot \fvec(s) + \gvec(s) \\
		\Leftrightarrow &~\fvec(s) + \alpha \times \gvec(s) = (1-\alpha) A^\intercal D^{-1} \fvec(s) + \alpha \times \vec e_s
	\end{align*}}
\end{proof}

Now we use Lemma~\ref{lem:invarianceFwd} to derive the update procedure.
Consider when an edge $u \rightarrow v$ is inserted to $G$.
Since the outdegree of $u$ increases by $1$, the invariant does not hold
for the out neighbors of $u$ any more:
the reason being that every out neighbor receives an equal proportion of
mass from $u$ which is $(1-\alpha)$ divided by the outdegree of $u$ times
the amount of mass that $u$ pushed.
To solve this imbalance,
clearly a simple solution is to scale $\fvec(s, u)$ by a factor of $(d^{\dout}(u) + 1) / d^{\dout}(u)$.
This ensures the invariant for every out neighbor of $u$, except $v$.
This is because we need to take into account the amount of residuals $v$ 
should have received previously, had the edge $u \rightarrow v$ existed.
Finally, since we have increased $\fvec(s, u)$, this will break the invariant
for $u$. Hence we will reduce $\gvec(s, u)$ by the increased amount on
$\fvec(s, u)$, divided by $\alpha$.
See Algorithm~\ref{alg:lpUpdateFwd} below for details.
To work with an undirected graph, we will apply insert/delete twice,
for both direction between $u$ and $v$.
It's not hard to see this, following our discussion above.

\newcommand{\alglpFwdUpd}{{\sc UpdateForwardPush}}
\begin{algorithm}
	\small
	\caption{\alglpFwdUpd}
	\label{alg:lpUpdateFwd}
	\begin{algorithmic}[1]
		\Input ($s, \fvec(s), \gvec(s), u, v, G, \rmax$)
		\Require $G$ is a directed graph. Let $u \rightarrow v$ be the previous edge update,
						with $G$ being the updated graph.
		\State Apply Insert/Delete to $\fvec(s)$ and $\gvec(s)$.
		\State \Return \alglpFwd$(s, \fvec(s), \gvec(s), G, \rmax)$
		\vspace{0.05in}
		\Procedure{Insert}{$u, v$} \label{step:insertFwd}
			\State $\fvec(s, u) \mathrel{*}= \frac {d^{\dout}(u)} {d^{\dout}(u) - 1} $\label{step:fwdIns1}
			\State $\gvec(s, u) \mathrel{-}= \frac {\fvec(s, u)} {d^{\dout}(u)} \cdot \frac 1 {\alpha} $ \label{step:fwdIns3}
			\State $\gvec(s, v) \mathrel{+}= \frac {(1-\alpha) \times \fvec(s, u)} {d^{\dout}(u)} \cdot \frac 1 {\alpha} $\label{step:fwdIns2}
		\EndProcedure
		\vspace{0.05in}
		\Procedure{Delete}{$u, v$}
			\State $\fvec(s, u) \mathrel{*}= \frac {d^{\dout}(u)} {d^{\dout}(u) + 1}$\label{step:fwdDel1}
			\State $\gvec(s, u) \mathrel{+}= \frac {\fvec(s, u)} {d^{\dout}(u)} \cdot \frac 1 {\alpha} $
			\State $\gvec(s, v) \mathrel{-}= \frac {(1-\alpha) \times \fvec(s, u)} {d^{\dout}(u)} \cdot \frac 1 {\alpha} $
		\EndProcedure
		\newline
	\end{algorithmic}
\end{algorithm}

Next we prove a theorem on the update cost of Algorithm~\ref{alg:lpUpdateFwd}. 
We will prove it on undirected graphs.
It's not clear to us how to analyze directed graphs with forward push:
the difficulty being that there is no clean bound on the error of Algorithm~\ref{alg:lpFwd},
between the estimates and the true personalized PageRank value.

\vspace{-0.05in}

\begin{theorem}\label{thm:fwdPush}
	Let $\seq {G_i = (V_i, E_i)}$ be a sequence of $\K+1$ undirected graphs,
	such that each graph is obtained from the previous graph by one edge update.
	Let $s$ be a random vertex of $V_0$.
	And let $\rmax$ be a parameter between $0$ and $1$.
	Then the total running time of maintaining a forward local push solution $\fvec_i(s)$ for each
	graph $G_i$ such that
	$\abs {\fvec_i(s, t) - \pi_i(s, t)} / d_i(t) \le \rmax$,
	for any $t \in V_i$,
	using Algorithms \ref{alg:lpUpdateFwd} is at most $O(\K / (\alpha^2) + \K / (n \alpha^2 \rmax) + 1 / (\alpha\rmax))$.
\end{theorem}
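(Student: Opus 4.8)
The plan is to follow the three-step template used in the proof of Theorem~\ref{thm:revPush}, Part~$1$: first bound the cost of one call to \alglpFwd\ by the change it induces in an $\ell_1$ potential; then set up a per-update amortized bound by telescoping that potential; and finally sum over all source nodes $s\in V_0$ and divide by $n$. The one structural difference from the reverse-push analysis is that no degree weighting (no analogue of $\phivec_i$) is needed here: a forward push at a vertex $u$ costs $O(d_i^{\dout}(u))$, but since the residual threshold is degree-normalized, that push raises $\fvec(s,u)$ by at least $\alpha\rmax\, d_i^{\dout}(u)$, so the two factors of $d_i^{\dout}(u)$ cancel and the relevant potential is simply $\norm{\gvec(s)}_1$. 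Throughout I use, inductively over $i$, that before the $i$-th update the accuracy guarantee $\abs{\fvec_{i-1}(s,t)-\pi_{i-1}(s,t)}\le\rmax\,d_{i-1}(t)$ holds; this follows from $\abs{\gvec_{i-1}(s,t)}/d_{i-1}(t)\le\rmax$ via \eqref{eq:fwdReisdualForm}, because $\sum_x\abs{\gvec_{i-1}(s,x)}\,\pi_{i-1}(x,t)\le\rmax\sum_x d_{i-1}(x)\,\pi_{i-1}(x,t)=\rmax\,d_{i-1}(t)$ by Proposition~\ref{prop:reversible}.

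\textbf{Step (a): cost of \alglpFwd.} Consider one call that starts at $(\fvec'(s),\gvec'(s))$ and ends at $(\fvec_i(s),\gvec_i(s))$, with $(\fvec''(s),\gvec''(s))$ the intermediate state right after the positive phase. In the positive phase each push at $u$ raises $\fvec(s,u)$ by $\alpha\gvec(s,u)\ge\alpha\rmax\,d_i^{\dout}(u)$ at cost $O(d_i^{\dout}(u))$ and estimates never decrease, so the phase costs at most $\sum_u(\fvec''(s,u)-\fvec'(s,u))/(\alpha\rmax)$; applying \eqref{eq:fwdReisdualForm} to both states and using $\sum_t\pi_i(x,t)=1$, this equals $\sum_x(\gvec'(s,x)-\gvec''(s,x))/(\alpha\rmax)$. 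A per-coordinate case split on the sign of $\gvec''(s,x)$ --- if $\gvec''(s,x)<0$ then $x$ never pushed in this phase and received only nonnegative mass, so $\gvec'(s,x)\le\gvec''(s,x)<0$ --- gives $\gvec'(s,x)-\gvec''(s,x)\le\abs{\gvec'(s,x)}-\abs{\gvec''(s,x)}$, hence the positive phase costs at most $(\norm{\gvec'(s)}_1-\norm{\gvec''(s)}_1)/(\alpha\rmax)$. The symmetric argument bounds the negative phase by $(\norm{\gvec''(s)}_1-\norm{\gvec_i(s)}_1)/(\alpha\rmax)$, so one call to \alglpFwd\ costs at most $(\norm{\gvec'(s)}_1-\norm{\gvec_i(s)}_1)/(\alpha\rmax)$. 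For the initialization, which runs \alglpFwd\ on $(\mathbf 0,\vec e_s)$, this one-call bound gives cost at most $(\norm{\vec e_s}_1-\norm{\gvec_0(s)}_1)/(\alpha\rmax)=(1-\norm{\gvec_0(s)}_1)/(\alpha\rmax)$.

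\textbf{Step (b): amortization.} After the $i$-th update, the update step (line~\ref{step:upd} of Algorithm~\ref{alg:lpUpdateFwd}), which for an undirected edge applies Insert/Delete to both orientations, changes only $\gvec(s,u_i)$ and $\gvec(s,v_i)$, by a total $\ell_1$ amount at most $\tfrac2\alpha\big(\abs{\fvec_{new,i}(s,u_i)}/d_i(u_i)+\abs{\fvec_{new,i}(s,v_i)}/d_i(v_i)\big)$, where $\fvec_{new,i}$ is the estimate right after line~\ref{step:fwdIns1} (resp.~\ref{step:fwdDel1}). Because that line scales the estimate by exactly $d_i(u_i)/d_{i-1}(u_i)$, we get $\fvec_{new,i}(s,u_i)/d_i(u_i)=\fvec_{i-1}(s,u_i)/d_{i-1}(u_i)$, so the inductive error bound yields $\abs{\fvec_{new,i}(s,u_i)}/d_i(u_i)\le\pi_{i-1}(s,u_i)/d_{i-1}(u_i)+\rmax$, and likewise for $v_i$. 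Writing $\gvec'_i(s)$ for the residual after line~\ref{step:upd}, this gives $\norm{\gvec'_i(s)}_1\le\norm{\gvec_{i-1}(s)}_1+\tfrac2\alpha\big(\pi_{i-1}(s,u_i)/d_{i-1}(u_i)+\pi_{i-1}(s,v_i)/d_{i-1}(v_i)+2\rmax\big)$. Combining with Step (a) and telescoping $\norm{\gvec_i(s)}_1$ over $i=0,\dots,\K$ (the $\norm{\gvec_0(s)}_1$ terms cancel against the initialization cost), the total cost for source $s$ is
\[
\psi(s)\;\le\;\frac1{\alpha\rmax}\;+\;\frac2{\alpha^2\rmax}\sum_{i=1}^{\K}\left(\frac{\pi_{i-1}(s,u_i)}{d_{i-1}(u_i)}+\frac{\pi_{i-1}(s,v_i)}{d_{i-1}(v_i)}+2\rmax\right).
\]

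\textbf{Step (c): sum over sources.} Summing $\psi(s)$ over $s\in V_0$ and invoking Proposition~\ref{prop:updateBound}, which gives $\sum_{s}\pi_{i-1}(s,u_i)/d_{i-1}(u_i)\le1$ and the same for $v_i$, for every $i$, yields $\sum_{s\in V_0}\psi(s)\le\tfrac n{\alpha\rmax}+\tfrac2{\alpha^2\rmax}\cdot\K\cdot(2+2n\rmax)=O\!\big(\tfrac n{\alpha\rmax}+\tfrac\K{\alpha^2\rmax}+\tfrac{\K n}{\alpha^2}\big)$. Dividing by $n$ for a uniformly random source $s\in V_0$ gives the claimed bound $O(\K/\alpha^2+\K/(n\alpha^2\rmax)+1/(\alpha\rmax))$.

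\textbf{Main obstacle.} The two delicate points are: (i) the cost accounting of \alglpFwd\ in the presence of negative residuals --- the per-coordinate inequalities in Step (a) only go through because each phase is monotone (a vertex, once pushed, only receives same-sign mass thereafter), exactly as in the reverse-push analysis; and (ii) controlling $\abs{\fvec_{new,i}(s,u_i)}$ after the estimate is scaled by a factor as large as $d_i(u_i)/d_{i-1}(u_i)$ --- a plain additive error guarantee would be amplified by this degree factor, so it is essential that the maintained error is \emph{degree-normalized}, which makes the scaling in lines~\ref{step:fwdIns1}/\ref{step:fwdDel1} exactly preserve $\fvec(s,u)/d(u)$. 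As in the reverse-push case, we also assume each edge endpoint already has positive degree, deferring genuinely new or dangling nodes to Section~\ref{sec:discussions}.
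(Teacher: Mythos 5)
Your proof is correct and follows essentially the same three-step strategy as the paper: bound each call to \alglpFwd\ by the drop in $\norm{\gvec(s)}_1/(\alpha\rmax)$ using a monotone positive-phase/negative-phase case split (the forward analogue of Lemma~\ref{lem:amortize}), telescope that potential across updates with the per-update perturbation $\Delta_i(s)$, and sum over $s$ via Proposition~\ref{prop:updateBound}. You are somewhat more careful than the paper in two small places --- you explicitly account for both orientations of the undirected edge and carry $\abs{\fvec}$ rather than $\fvec$ through the degree-normalized error bound, and you exploit the identity $\fvec_{\text{new}}(s,u_i)/d_i(u_i)=\fvec_{i-1}(s,u_i)/d_{i-1}(u_i)$ rather than the paper's slightly imprecise $\fvec_{i-1}(s,u_i)/d_i(u_i)$ --- but these only affect constants and the proof is otherwise the same.
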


As long as $\abs {\gvec_i(s, t)} / d_i(t) \le \rmax$,
for any $t \in V_i$ and $i = 0,\dots,\K$,
then $\abs {\fvec_i(s, t) - \pi_i(s, t)} / d_i(t) \le \rmax$.
This accuracy guarantee follows from the work of Anderson et al. and Lofgren et al.~\cite{focs:chung, thesis:lofgren}.
Hence, we will focus on analyzing running time.
To derive this theorem, we divide the arguments into three parts:
first, we present a bound on the initiation cost as well as the update cost per edge;
secondly, we amortize the costs together, cancelling out the residual terms;
finally, we use properties from undirected graphs to bound the total cost.
\iftoggle{fullpaper}{
	Let
	{\begin{align*}
	 (\fvec_0(s), \gvec_0(s)) \triangleq \mbox{\alglpFwd}(s, \vec e_s, \mathbf 0, G_0, \rmax)
	\end{align*}}%

\begin{lemma}\label{lem:initFwd}
	The running time of Algorithm~\ref{alg:lp} is at most:
	\begin{equation*}
		\frac {1 - \sum_{t \in V_0} \gvec_0(s, t)} {\alpha\rmax}
	\end{equation*}
\end{lemma}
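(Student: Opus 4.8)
The plan is to argue exactly as in the proof of Lemma~\ref{lem:initCost}, with one simplification that is special to forward push. First I would record a monotonicity property of \alglpFwd{} when it is run from scratch, i.e.\ from the state with residual $\vec e_s$ (all mass at $s$) and estimate $\mathbf 0$: every \textsc{FwdPush} only moves nonnegative mass to out-neighbors and an $\alpha$-fraction into the (already nonnegative) estimate, so all residuals stay $\ge 0$ throughout, the second \textbf{while} loop is never entered, and the estimate vector $\fvec(s)$ is nondecreasing for the entire run. This is what lets me use $\sum_{t}\fvec(s,t)$ as a potential function, and note that $\fvec(s,u)$ is only ever modified by a call to \textsc{FwdPush}$(u)$.

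Next I would charge the cost of each call \textsc{FwdPush}$(u)$ to the increase it produces in $\fvec(s,u)$. Such a call costs $\Theta(d^{\dout}(u))$ (it scans the out-neighbors of $u$, and $d^{\dout}(u)\ge 1$); it is made only when $\gvec(s,u)/d^{\dout}(u) > \rmax$; and it raises $\fvec(s,u)$ by $\alpha\,\gvec(s,u) > \alpha\rmax\, d^{\dout}(u)$. Hence its cost is at most $1/(\alpha\rmax)$ times the increment it makes to $\fvec(s,u)$ — here the factor $d^{\dout}(u)$ cancels, which is exactly why the forward-push bound carries no $\dbar$ factor, in contrast with Lemma~\ref{lem:initCost}, where a reverse push at $u$ costs $d^{\din}(u)$ but raises the estimate only by $\Theta(\alpha\rmax)$. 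Summing this inequality over all pushes performed and using monotonicity (initial estimate $\mathbf 0$, final estimate $\fvec_0(s)$), the total running time is at most $\frac{1}{\alpha\rmax}\sum_{t\in V_0}\fvec_0(s,t)$.

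Finally I would rewrite this sum using the invariant~\eqref{eq:fwdReisdualForm}: since $\fvec_0(s,t) = \ppr s t - \sum_{x\in V_0}\gvec_0(s,x)\,\ppr x t$,
\[
	\sum_{t\in V_0}\fvec_0(s,t) = \sum_{t\in V_0}\ppr s t - \sum_{x\in V_0}\gvec_0(s,x)\sum_{t\in V_0}\ppr x t = 1 - \sum_{t\in V_0}\gvec_0(s,t),
\]
using that $\sum_{t}\ppr y t = 1$ for every node $y$ (personalized PageRank from $y$ is a probability distribution over nodes). This yields the stated bound $\bigl(1 - \sum_{t\in V_0}\gvec_0(s,t)\bigr)/(\alpha\rmax)$.

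I do not expect a serious obstacle; the one delicate point is the monotonicity/nonnegativity claim in the first step, since the amortized charging in the second step would be invalid if a node could first push positive mass and later push negative mass and thereby be charged twice. For the from-scratch run this never occurs, but the same split-into-monotone-phases care will be needed when this bookkeeping is reused for the dynamic updates.
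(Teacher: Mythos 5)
Your proof is correct and takes essentially the same route as the paper's: charge each \textsc{FwdPush}$(u)$ (which costs $\Theta(d^{\dout}(u))$) against the increase of at least $\alpha\rmax\,d^{\dout}(u)$ that it makes to $\fvec(s,u)$, sum to get $\sum_t\fvec_0(s,t)/(\alpha\rmax)$, and then rewrite that sum via the invariant~\eqref{eq:fwdReisdualForm} and $\sum_t\ppr y t=1$. The paper states this in a single line without making the monotonicity explicit; your added observation that the from-scratch run keeps all residuals nonnegative (so the second \textbf{while} loop is vacuous and the charging is single-phase) is a correct and worthwhile clarification of the same argument, and you rightly flag that the positive/negative phase split will matter once this bookkeeping is reused in Lemma~\ref{lem:updateFwd}.
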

\begin{proof}
Note that every time $t$ pushes, it's estimate increases by at least $\alpha \rmax d^{\dout}(t)$,
therefore total cost is at most:
	\shrink{\begin{align*}
		 & \sum_{t \in V_0} \frac {\fvec_0(s, t)} {\alpha \rmax \times d_0(t)} \times d_0(t)
		= \sum_{t \in V_0} \frac {\fvec_0(s, t)} {\alpha \rmax}\\
		=& \sum_{t \in V_0} \frac {\pi_0(s, t) - \sum_{x \in V_0} \gvec_0(s, x) \times \pi_0(x, t)} {\alpha \rmax} \\
		=& \frac {\sum_{t \in V_0} \pi_0(s, t) - \sum_{x \in V_0} \sum_{t \in V_0} \gvec_0(s, x) \times \pi_0(x, t)  } {\alpha \rmax} \\
		=& \frac {1 - \sum_{x \in V_0} \gvec_0(s, x)} {\alpha\rmax}
	\end{align*}}%
\end{proof}

Then we derive a result of the update time per edge. Let
\shrink{\begin{align*}
	 &(\fvec_{i}(s), \gvec_{i}(s)) \triangleq \\
	 &		\quad \mbox{\alglpFwdUpd}(s, \fvec_{i-1}(s), \gvec_{i-1}(s), u_i, v_i, G_i, \rmax)
\end{align*}}%
for each $i = 1,\dots,\K$.
And let $\Delta_i(s)$ denote the updated residual amount.
That is, if $e_i$ is an insertion, then by taking absolute values from step~\ref{step:fwdIns3} and~\ref{step:fwdIns2},
we obtain
\shrink{\begin{align*}
	\Delta_i(s) \triangleq \frac {2 - \alpha} {\alpha} \times \frac {\fvec_{i-1}(s, u_i)} {d(u_i)}
\end{align*}}
And similarly if $e_i$ is a deletion.

\begin{lemma}\label{lem:updateFwd}
	Let $e_i = (u_i, v_i)$ denote the $i$-th edge update.
	The running time of Algorithm~\ref{alg:lpUpdateFwd} for updating $\fvec_{i-1}(s)$ and
	$\gvec_{i-1}(s))$ is at most:
	\begin{equation}
		\frac {\norm {\gvec_{i-1}(s)}_1 - \norm {\gvec_i(s)}_1 + \Delta_i(s)} {\alpha \rmax}
	\end{equation}
	for each $i = 1,\dots,K$.
\end{lemma}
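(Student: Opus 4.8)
The plan is to reuse the amortized template of Lemma~\ref{lem:amortize}, with the all-ones vector playing the role that $\phivec_i$ plays there. This substitution is exactly what forward push calls for: a single \textsc{FwdPush} at a node $t$ costs $\Theta(d_i(t))$ (step~\ref{step:fwdPush} of Algorithm~\ref{alg:lpFwd}) and, during a phase in which $t$'s residual keeps a fixed sign, changes $\fvec(s,t)$ by $\alpha\gvec(s,t)$ with $\abs{\gvec(s,t)}>\rmax\,d_i(t)$, so that push costs at most the absolute change it makes in $\fvec(s,t)$ divided by $\alpha\rmax$ --- the same per-push accounting already used in Lemma~\ref{lem:initFwd}; and every pair satisfying Equation~\eqref{eq:fwdReisdualForm} on $G_i$ has $\sum_{t\in V_i}\pi_i(x,t)=1$ for each $x$, so no degree weights survive when we convert a sum of $\fvec(s,\cdot)$-increments back into a statement about $\gvec(s,\cdot)$. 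Accordingly I would split the cost of Algorithm~\ref{alg:lpUpdateFwd} into (i) the $O(1)$ arithmetic of the \textsc{Insert}/\textsc{Delete} step, (ii) the positive-residual phase of the ensuing \textsc{ForwardLocalPush} call, and (iii) its negative-residual phase.

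Let $(\fvec'(s),\gvec'(s))$ be the vectors right after the \textsc{Insert}/\textsc{Delete} step. By Lemma~\ref{lem:invarianceFwd} that step re-establishes the local invariant, so $(\fvec'(s),\gvec'(s))$ satisfies Equation~\eqref{eq:fwdReisdualForm} on $G_i$; since \textsc{FwdPush} preserves that equation, the vectors $(\fvec''(s),\gvec''(s))$ at the end of the first \textbf{while} loop and the final vectors $(\fvec_i(s),\gvec_i(s))$ are feasible pairs for $G_i$ as well. A direct reading of steps~\ref{step:fwdIns1}--\ref{step:fwdIns2} (and their deletion analogues) shows the step changes only $\gvec(s,u_i)$ and $\gvec(s,v_i)$, by amounts whose magnitudes sum to exactly $\Delta_i(s)$, so $\norm{\gvec'(s)}_1\le\norm{\gvec_{i-1}(s)}_1+\Delta_i(s)$. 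Now for the positive-residual phase: $\fvec(s,\cdot)$ is monotone nondecreasing there (each push adds $\alpha\gvec(s,u)>0$ to one coordinate), so summing the per-push bound over the whole phase gives $T^+\le\sum_{t\in V_i}(\fvec''(s,t)-\fvec'(s,t))/(\alpha\rmax)$. Subtracting the two instances of Equation~\eqref{eq:fwdReisdualForm} at $(\fvec'',\gvec'')$ and $(\fvec',\gvec')$ gives $\fvec''(s,t)-\fvec'(s,t)=\sum_{x\in V_i}(\gvec'(s,x)-\gvec''(s,x))\,\pi_i(x,t)$, and summing over $t$ with $\sum_{t\in V_i}\pi_i(x,t)=1$ yields $T^+\le\sum_{x\in V_i}(\gvec'(s,x)-\gvec''(s,x))/(\alpha\rmax)$.

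The step I expect to be the real obstacle is upgrading this signed sum to the honest $\ell_1$ difference, i.e.\ proving $\gvec'(s,x)-\gvec''(s,x)\le\abs{\gvec'(s,x)}-\abs{\gvec''(s,x)}$ coordinatewise, exactly as in the proof of Lemma~\ref{lem:amortize}. When $\gvec''(s,x)\ge0$ this is immediate. When $\gvec''(s,x)<0$, the key point is that $x$ performed no push during the positive phase: a push would reset $\gvec(s,x)$ to $0$, after which $x$ can only gain mass (the positive phase distributes only the positive quantities $(1-\alpha)\gvec(s,u)/d_i(u)$), so its residual would end the phase $\ge0$. Hence $\gvec''(s,x)=\gvec'(s,x)+(\text{nonnegative})$, so $\gvec'(s,x)\le\gvec''(s,x)<0$ and both sides of the claimed inequality are nonpositive with the left no larger than the right. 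Summing over $x$ gives $T^+\le(\norm{\gvec'(s)}_1-\norm{\gvec''(s)}_1)/(\alpha\rmax)$. This is precisely where the split of \textsc{ForwardLocalPush} into a positive phase followed by a negative phase earns its keep: without it a coordinate could flip sign repeatedly and the comparison would fail.

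The negative-residual phase is the mirror image. There $\fvec(s,\cdot)$ is monotone nonincreasing, a push at $t$ lowers $\fvec(s,t)$ by more than $\alpha\rmax\,d_i(t)$, and --- because the first phase left every coordinate of $\gvec(s,\cdot)$ at most $\rmax\,d_i(\cdot)$ and negative pushes add only nonpositive mass --- no coordinate recrosses $\rmax\,d_i(\cdot)$, so the call halts with the degree-normalized residual of every node inside $\pm\rmax$ (the correctness of this stopping rule is what is cited from \cite{focs:chung,thesis:lofgren}). Running the same two steps with the signs of ``$\ge 0$'' and ``$<0$'' exchanged --- now a coordinate that ends the phase $>0$ is the one that never pushed during it --- gives $T^-\le(\norm{\gvec''(s)}_1-\norm{\gvec_i(s)}_1)/(\alpha\rmax)$. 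Adding $T^+$ and $T^-$ telescopes the $\norm{\gvec''(s)}_1$ term; combining with $\norm{\gvec'(s)}_1\le\norm{\gvec_{i-1}(s)}_1+\Delta_i(s)$, and absorbing the $O(1)$ cost of the \textsc{Insert}/\textsc{Delete} step (which contributes only $O(\K)$ over all updates, dominated by the other terms of Theorem~\ref{thm:fwdPush}), gives the claimed bound $(\norm{\gvec_{i-1}(s)}_1-\norm{\gvec_i(s)}_1+\Delta_i(s))/(\alpha\rmax)$.
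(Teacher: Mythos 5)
Your proof is correct and is exactly the adaptation the paper intends: the paper's own ``proof'' of this lemma is the one-line remark ``The same as Lemma~\ref{lem:amortize},'' and you have faithfully carried out that adaptation, including the key observation that the all-ones vector replaces $\phivec_i$ (so the $\norm{\phivec_i-\phivec_{i-1}}_1/\alpha$ term vanishes) because a forward push at $t$ costs $\Theta(d_i(t))$ while increasing $\fvec(s,t)$ by more than $\alpha\rmax\,d_i(t)$, so the degree cancels.
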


\begin{proof}
	The same as Lemma~\ref{lem:amortize}.
\end{proof}

\begin{proofof}{Theorem~\ref{thm:fwdPush}}
	Combining Lemma~\ref{lem:initFwd} and~\ref{lem:updateFwd},
	we conclude that the total running time of maintaining
	a forward local push solution for $s$ is at most:
	\shrink{\begin{equation}\label{eq:costFwd}
		\psi(s) \triangleq \frac {1 + \sum_{i = 1}^{\K} \Delta_i(s)} {\alpha \rmax}
	\end{equation}}

	We know that $\fvec_{i-1}(s, u_i) \le \pi_{i-1}(s, u_i) + \rmax \times d_{i-1}(u_i)$:
	this follows from the accuracy guarantee of $\fvec_{i-1}(s, u_i)$.
	If $e_i$ is an edge insertion, then:
	\shrink{\begin{align}\label{eq:costEd}
		\Delta_i(s) =& \frac {\fvec_{i-1}(s, u_i)} {d_{i}(u_i)} \times \frac {2 - \alpha} {\alpha} \\
					\le& \frac {\pi_{i-1}(s, u_i) + \rmax \times d_{i-1}(u_i)} {d_i(u_i)} \times \frac {2 - \alpha} {\alpha} \\
					\le& \frac {\pi_{i-1}(s, u_i) + \rmax \times d_{i-1}(u_i)} {d_{i-1}(u_i)} \times \frac {2 - \alpha} {\alpha}
	\end{align}}%
	By Proposition~\ref{prop:updateBound},
		\[\sum_{s \in V_i} \frac {\pi_{i-1}(s, u_i)} {d_{i-1}(u_i)} \le 1.\]
	Therefore,
	 \[ \Psi \triangleq \sum_{s \in V_0} \psi(s) \le \frac n {\alpha\rmax} + \K \times \frac {1 + n \rmax} {\alpha\rmax} \times \frac {2 - \alpha} {\alpha}\]
	If $e_i$ is an edge deletion, it's clear that one can use the right expression
	for $\Delta_i(s)$ to obtain a similar conclusion.
	Details omitted.
\end{proofof}

}{
	\noindent The proof can be found in the full version of this paper.
}

\subsection{Discussions}\label{sec:discussions}

The algorithms presented in the previous sections do not handle dangling nodes.
We consider how to take care of this situation.
We also extend the algorithms to handle newly arrived nodes.

\vspace{-0.1in}

\paragraph{Dangling nodes}
There are two possible ways to handle dangling nodes.
The first way is to create a separate sink node.
Consider the case of doing reverse push (similarly for forward push).
If a node $u$ does not have any in neighbors and need to perform a push operation,
then the amount of mass will be pushed to the sink node.
These mass will stay at the sink node.
Another way is to insert an edge from $t$ to $u$,
if $t$ is the target node for maintaining reverse push solutions.
Later on if an edge is added from $v$ to $u$,
so that $u$ is not a dangling node anymore,
then we first delete the edge from $t$ to $u$ and then insert an edge from $v$ to $u$.

\vspace{-0.05in}

\paragraph{Node arrivals}
When a new node arrives, one can insert the edges one by one
using the procedures in Algorithm~\ref{alg:lpUpdate} (line \ref{step:insertBk}) and Algorithm~\ref{alg:lpUpdateFwd} (line \ref{step:insertFwd}),
and then invoke the local push algorithms to reduce the maximum absolute values of residuals.
Similarly for node deletions.

\section{Numerical results}

\newcommand{\algLazyFwdPush}{{\sc LazyFwdUpdate}}
\newcommand{\algTracking}{{\sc TrackingPPR}}
\newcommand{\algRandwalk}{{\sc RandomWalk}}

In this section, we evaluate our approach in experiments.
We first compare our dynamic forward local push algorithm (Algorithm~\ref{alg:lpUpdateFwd}) to the previous
work of Ohsaka et al.~\cite{kdd:dynPr}.
Our dynamic Forward Push algorithm differs from Ohsaka et. al.'s in an important way.  When an edge $(u, v)$ arrives, they propose to push the change in residual immediately to all of $u$'s neighbors, requiring $\Omega(d(u))$ time, in addition to any pushes needed to restore the invariant.  In contrast, we modify the estimate and residual values only at $u$ and $v$, taking only $O(1)$ time, before performing any pushes needed to restore the invariant.  We found that this simple optimization decreased the number of residual values updated and led to a 1.5 - 3.5 times improvement in running time, without sacrificing accuracy.

We then make a comparison to the random walk approach of Bahmani et al.~\cite{vldb:walkUpd}.
We consider the problem of identifying the top-K vertices that have the highest personalized PageRank
from a source node,
with random walks being commonly used to solve it on social networks~\cite{gupta2013wtf,vldb:frogwild}.
We found that the random walk algorithm uses 4.5 to 12 times as much storage compared to forward local push algorithm,
and requires 1.5 to 2.5 times as much time to update as well,
to achieve the same level of precision.

Finally, we evaluate our dynamic reverse push algorithm.
We compare against the baseline where we recompute the reverse push from scratch after every edge update.
We found that our approach is 100x faster than this baseline.

\subsection{Experimental setup}

We implement the experiments in Scala.
We use an Amazon EC2 Ubuntu machine with 64GB of RAM
and 16 processors of Intel(R) Xeon(R) CPU E5-2676 v3 @ 2.40GHz.
We tested with three undirected social networks graphs, all downloaded from \url{https://snap.stanford.edu/data/}.
The statistics are listed in Table~\ref{table:stat} below.
For each social network, we generate a uniformly random edge stream of the graph.
Given a test vertex,
we initialize its data structures using the subgraph that consists of the first half of the edges in the random edge stream.
After that, for each new edge arrival, we update its data structure depending the approach we are using.
The teleport probability $\alpha$ is set to $0.2$ in all the experiments.
For ease of comparison, we use \algLazyFwdPush~
to refer to Algorithm~\ref{alg:lpUpdateFwd} (because it ``lazily'' avoids pushing from vertices that receive new edges),
Ohsaka et al.'s Algorithm \algTracking~
and the random walk update Algorithm by~\algRandwalk.
%\footnote{Note that Ohsaka et al. propose a variant of Local Push which performs push operations until $\forall u,~\gvec(s, u) < \rmax$,
%while we follow Andersen et al. \cite{focs:chung} and define Local Push to push until $\forall u~, \gvec(s, u) / d^{\dout}(u) < \rmax$.
%The motivation to push less from nodes $u$ with high degree is that pushing from them takes more time,
%so our effort is better focused pushing from nodes $u$ with a large ratio of $\gvec(s, u) / {d^{\dout}(u)}$.
%For consistency, we use the stopping condition $\forall u~, \gvec(s, u) / {d^{\dout}(u)} < \rmax$ for both \algLazyFwdPush~and \algTracking.
%In a preliminary experiment, the relative efficiency of the two algorithms does not change significantly if we use the condition
%$\forall u~, \gvec(s, u) < \rmax$ for both algorithms.}

\begin{table}[!ht]
	\centering
	\begin{tabular}{| l | l | l | l |}
		\hline
		graph & \# nodes & \# edges \\
		\hline
		DBLP & 317,080 & 1,049,866 \\
		\hline
		LiveJournal & 3,997,962 & 34,681,189 \\
		\hline
		Orkut & 3,072,441 & 117,185,083 \\
		\hline
	\end{tabular}
	\caption{Basic statistics of graphs that we experimented with.}
	\label{table:stat}
\end{table}

\subsection{Residual updates and running time}

We compare \algLazyFwdPush~and \algTracking~over 100 uniformly sampled source vertices from each graph.
In our implementation,
we first update the previous estimate vectors and residual vectors
based on \algLazyFwdPush~and \algTracking, respectively.
We then invoke the same forward local push routine (Algorithm~\ref{alg:lpFwd})
to reduce any residual that is outside the desired threshold.
At the end of the edge stream,
we compare the performance of each algorithm,
using measurements described below:
\begin{itemize}
	\setlength\itemsep{0.05em}
	\item Residual update: the number of times that a vertex's residual is updated.
		Each residual update corresponds to an update in the priority queue of residuals.
	\item Push iterations: the number of times that Algorithm~\ref{alg:lpFwd}
		performs a push operation in step~\ref{step:fwdPush}.
		The cost of a push iteration is the degree of the pushed vertex times the cost of a residual
		update.
	\item Running time: the amount of update time it takes to maintain estimates and residuals.
		We exclude the amount of time it takes to initialize the data structure.
		Note that this cost is negligible compared to the amount of update time in our experiment.
	\item $l_1$ error: the $l_1$ distance between the computed estimates and the benchmark values.
		To get the benchmark, we ran the local push algorithm with threshold $0.02 / n$,
		where $n$ is the number of vertices of the graph.
	\item Storage: the number of nonzero estimates and residuals maintained.
\end{itemize}

For residual update, push iteraions and running time,
we computed the average value over the 100 samples.
For $l_1$ error, we compute the median error over the 100 samples.
The test results is shown in Table~\ref{table:random}.
We found that \algTracking~does more residual updates than \algLazyFwdPush,
for all three graphs, and the improvement is more significant for denser graphs.
While Algorithm~\ref{alg:lpUpdateFwd} does more push operations than \algTracking,
since the latter already does a push operation before invoking Algorithm~\ref{alg:lpFwd},
the compound effect is that we achieved 1.5 to 5 times speed up, without sacrificing accuracy.

\begin{table*}[t]
	\centering
	\resizebox{0.7\hsize}{!}{%
	\begin{tabular}{|l | ll | ll | ll|}
	\hline
		& \multicolumn{2}{|c|}{DBLP} & \multicolumn{2}{|c|}{LiveJournal} & \multicolumn{2}{|c|}{Orkut} \\
		& {\sc LazyFwd} & {\sc Tracking} & {\sc LazyFwd} & {\sc Tracking} &{\sc LazyFwd} & {\sc Tracking} \\
		& {\sc Update} & {\sc PPR} & {\sc Update} & {\sc PPR} &{\sc Update} & {\sc PPR} \\
	\hline
	Residual updates 	& $1.5 \times 10^6$ & $2.4 \times 10^6$	& $2.5 \times 10^6$ & $1.8 \times 10^7$		& $2.9\times 10^6$ & $6.2 \times 10^7$	\\
	Push iterations 	& $2.3\times 10^5$	&	$1.8 \times 10^5$	&	$1.5 \times 10^5$ & $1.0 \times 10^5$		& $5.3 \times 10^4$ & $2.6\times 10^4$ 	\\
	Running time 			& $2.9s$					&	$4.7s$							&	$17.1$ & $51.4$													& $49.8$ 	& $176.4$ 											\\
	$l_1$ error				&	$0.031$	& $0.031$											&	$0.145$ & $0.150$												&							$0.238$ & $0.244$ \\
	%Storage						&	$8.1\times 10^4$ & $6.3 \times 10^4$	& $1.0 \times 10^6$ & $8.0 \times 10^5$ & $2.9 \times 10^5$ & $2.3 \times 10^5$ \\
	\hline
	\end{tabular}%
	}
	\caption{Test results between \algLazyFwdPush~with $\rmax = 7 \times 10^{-7}$ and
		\algTracking~with $\rmax = 10^{-6}$.
		For each graph, we sampled 100 vertices uniformly at random.}
	\label{table:random}
\end{table*}

\begin{table*}[!ht]
	\centering
		\resizebox{0.7\hsize}{!}{%
	\begin{tabular}{|l | ll | ll | ll|}
	\hline
		& \multicolumn{2}{|c|}{DBLP} & \multicolumn{2}{|c|}{LiveJournal} & \multicolumn{2}{|c|}{Orkut} \\
		& {\sc LazyFwd} & {\sc Random} & {\sc LazyFwd} & {\sc Random} &{\sc LazyFwd} & {\sc Random} \\
		& {\sc Update} & {\sc Walk} & {\sc Update} & {\sc Walk} &{\sc Update} & {\sc Walk} \\
	\hline
	Storage				&	$2.6 \times 10^4$ & $3.2 \times 10^5$	& $5.3 \times 10^4$ & $3.2\times 10^5$ & $6.3\times 10^4$ & $3.2\times 10^5$ \\
	Accuracy			& 0.96	& 0.92	& 0.88 & 0.88 & 0.74 & 0.78 \\
	Running time & 0.32s	& 0.88s	& 8.8s & 15.3s & 29.7s & 53.0s \\ 
	\hline
	\end{tabular}%
	}
	\caption{Test results for top-50 between \algLazyFwdPush~with $\rmax = 5\times 10^{-5}$ and \algRandwalk~with 16000 walks.
		For each graph, we sampled 100 vertices uniformly at random.
		The parameters are chosen so that the median accuracy of both algorithms are around 0.9 on LiveJournal.}
	\label{table:top50}
\end{table*}

\begin{table*}[!ht]
	\centering
	\resizebox{0.7\hsize}{!}{%
	\begin{tabular}{|l | ll | ll | ll|}
	\hline
		& \multicolumn{2}{|c|}{DBLP} & \multicolumn{2}{|c|}{LiveJournal} & \multicolumn{2}{|c|}{Orkut} \\
		& {\sc LazyFwd} & {\sc Random} & {\sc LazyFwd} & {\sc Random} &{\sc LazyFwd} & {\sc Random} \\
		& {\sc Update} & {\sc Walk} & {\sc Update} & {\sc Walk} &{\sc Update} & {\sc Walk} \\
	\hline
	Storage				&	$6.0 \times 10^4$ & $5.0 \times 10^5$	& $1.2 \times 10^5$ & $5.0\times 10^5$ & $1.5\times 10^5$ & $5.0\times 10^5$ \\
	Accuracy			& 0.96	& 0.91	& 0.91 & 0.88 & 0.75 & 0.80 \\
	Running time & 0.41s	& 1.24s	& 8.6s & 16.0s & 30.1s & 59.7s \\
	\hline
	\end{tabular}%
	}
	\caption{Test results for top-100 between \algLazyFwdPush~with $\rmax = 2\times 10^{-5}$ and \algRandwalk~with 25000 walks.
		For each graph, we sampled 100 vertices uniformly at random.
		The parameters are chosen so that the median accuracy of both algorithms are around 0.9 on LiveJournal.}
	\label{table:top100}
\end{table*}

\begin{figure*}[!ht]
	\centering
	\begin{subfigure}[t]{0.5\textwidth}
		\centering
		\includegraphics[width=5cm]{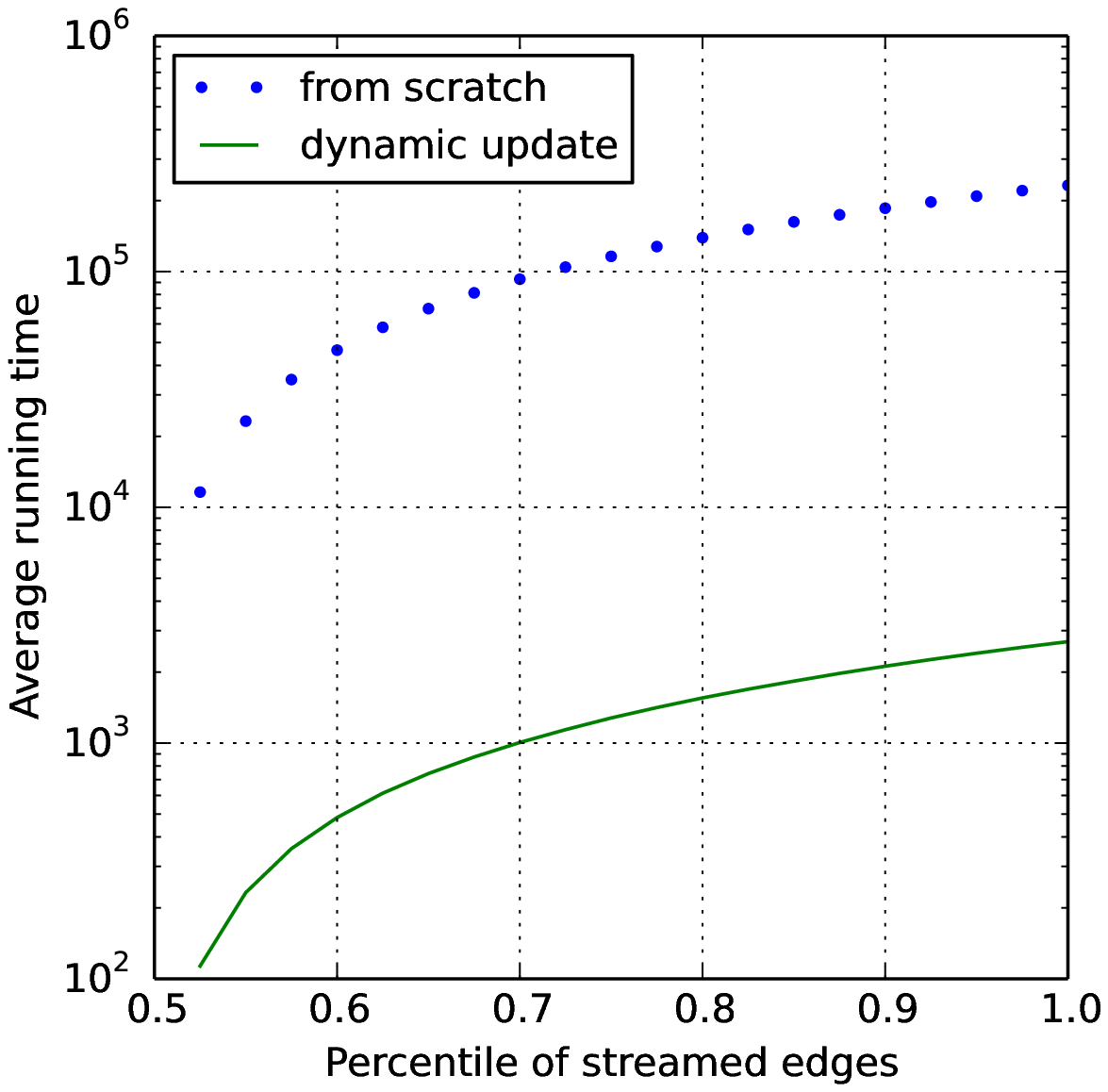}
	\end{subfigure}%
	~
	\begin{subfigure}[t]{0.5\textwidth}
		\centering
		\includegraphics[width=5cm]{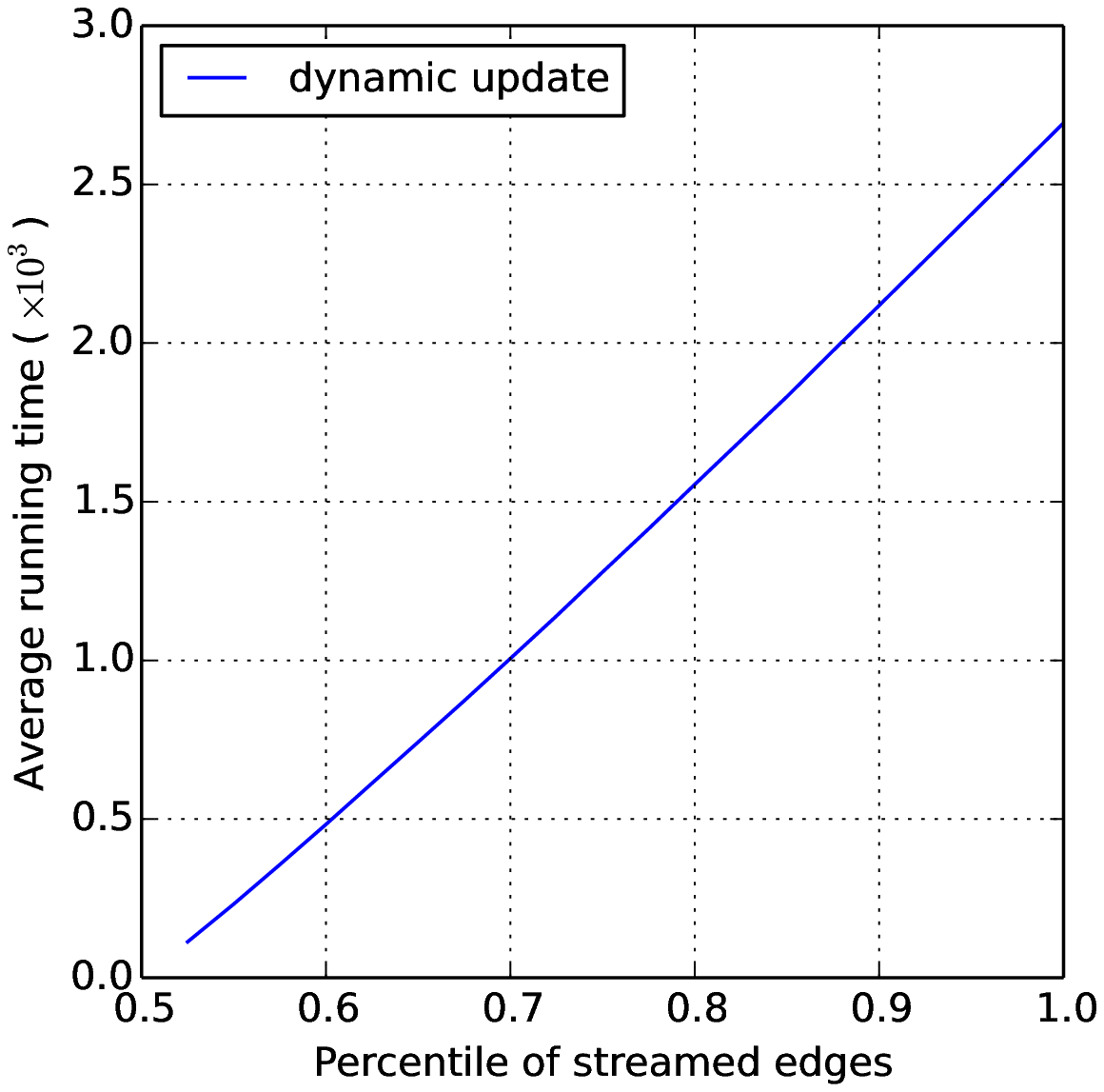}
	\end{subfigure}
	\caption{Running time comparisons on LiveJournal, between maintaining reverse push using Algorithm~\ref{alg:lpUpdate}
		and computing reverse push from scratch for every edge insertion with $\rmax = 10^{-4}$.
		We take 100 random test nodes and compute the average running time for each algorithm.}
	\label{fig:rev_lp}
\end{figure*}
%\subsection{The effects of degree normalization}

\subsection{Comparison to random walks}

We compare \algLazyFwdPush~and \algRandwalk~for the problem of finding top-K vertices ranked by their
personalized Pagerank from a source vertex.
We sample 100 test vertices uniformly at random.
For each test vertex, the correct top-K solution is the set of K vertices with highest personalized
Pagerank from that vertex.
The performance of each algorithm is measured by:
\begin{itemize}
	\setlength\itemsep{0.05em}
	\item Accuracy: the number of correctly identified vertices (among
		its computed K vertices with highest personalized Pagerank) divided by K.
	\item Running time: the amount of time to maintain/update the data structures.
	\item Storage: to compare storage used by \algRandwalk, we multiply the number of random walks
		by the expected length of a walk ($5$ in our experiment) times 4 (number of bytes to store an integer vertex ID).
		\footnote{To allow for efficient update, it's necessary to build an inverted index for each vertex
		that quickly finds to the set of random walks crossing it.
		In our implementation we built a hashmap for this purpose,
		resulting in an additional factor of $2$ in the amount of storage used.
		Since there might be more efficient implementation, we do not take this additional factor into account
		in our comparison.}
		For \algLazyFwdPush,
		we multiply the number of nonzero estimates and residuals by 8 (number of bytes to store a pair of integer and floating point number).
\end{itemize}
Table~\ref{table:top50} and \ref{table:top100} below describe the results with $K$ being 50 and 100, respectively.
We take the average of running time and storage,
and median of accuracy over the 100 sampled vertices.
For the problem of identifying top-50 nodes,
we found that \algRandwalk~uses 4.5 to 12 times as much storage compared to \algLazyFwdPush,
and uses 1.6 to 2.7 as much running time.
On Orkut graph, when accuracy is controlled to be at the same level as \algRandwalk~for \algLazyFwdPush~(with $\rmax = 4 \times 10^{-5}$),
the average amount of storage becomes $7.0 \times 10^4$ and the average running time becomes $33.3s$.
When $K$ is 100, the test results are qualitatively consistent with the above conclusion.

\vfill\eject

\subsection{Evaluation of dynamic reverse push algorithm}

In this section we evaluate the performance of our dynamic reverse push algorithm.
We compare our proposed solution (Algorithm~\ref{alg:lpUpdate}) to the baseline,
where we recompute reverse push from scratch after every edge insertion using Algorithm~\ref{alg:lp}.
We sample 100 target vertices uniformly at random and take the average of the running time
over these 100 samples.
We have chosen the maximum residual parameter $\rmax = 10^{-4}$ so that at the end of
the edge stream, the median $l_1$ error over the 100 samples between the maintained reverse push estimates
and the true values is less than $0.1$.
Because it takes too long to compute reverse push from scratch after every edge insertion for 100 sampled vertices,
we only computed reverse push from scratch for the first 1000 edge insertions.
We take the average running time for performing reverse push during 1000 edge updates,
and use this average value times the number of inserted edges as an approximation
to the true running time, if we have computed reverse push from scratch for every edge insertion.
Figure~\ref{fig:rev_lp} shows the experiment result on the LiveJournal graph.
We found that our approach is 100x faster than the baseline.
The aggregate running time for updating reverse push has a linear correlation with the number of edges added.
This is consistent with the theoretical bound from Theorem~\ref{thm:revPush}.

\section{Future work}

In this work we presented an approach to obtain and analyze dynamic
local push algorithms.
We mention several questions that are worth further study.
In the submitted version we proposed an algoirthm that only uses random walk samples
to compute personalized PageRank between a source and a target node for undirected graphs,
but mistakenly claimed that its time complexity is the same as the state of the art~\cite{waw:bid}.
It is interesting to see if such an algorithm exists or not.
Secondly, in practice, some graphs cannot fit into one machine, and it would
be interesting to see how push algorithms perform versus random
walk algorithms in such a distributed setting.
Another question is whether one can obtain an error analysis of the forward push
algorithm on a directed graph.
Understanding such an issue can be helpful for applying it in practice as well.

\paragraph{Acknowledgement}

Research supported by the DARPA GRAPHS program via grant
FA9550-12-1-0411, and by NSF grant 1447697.

\bibliographystyle{plain}
\bibliography{rf,PPR-refs}

\end{document}